\documentclass[12pt]{article}
\usepackage{blindtext}
\usepackage{float}
\usepackage{mathrsfs} 
\usepackage{PRIMEarxiv}
\usepackage{pdflscape}   % in your preamble
\usepackage{booktabs}    % for \toprule, \midrule, etc.
\usepackage[utf8]{inputenc} % allow utf-8 input
\usepackage[T1]{fontenc}    % use 8-bit T1 fonts
\usepackage{hyperref}       % hyperlinks
\usepackage{url}    
\usepackage{amsmath}
\usepackage{graphicx,psfrag,epsf}
\usepackage{enumerate}
\usepackage{natbib}
\usepackage{amsmath}
\usepackage{amssymb}
\usepackage{url}
\usepackage{hyperref}
\usepackage{amsmath, amssymb, amsthm}
\usepackage{algorithm}
\usepackage{algpseudocode}
\usepackage{graphicx}
\usepackage{caption}
\usepackage{subcaption}
\usepackage{graphicx}
\usepackage{comment}
\usepackage{minted}
\usepackage{booktabs}
  
\usepackage{tikz}
\usetikzlibrary{shapes.geometric, arrows}

\theoremstyle{plain}
\newtheorem{theorem}{Theorem}[section]

\theoremstyle{definition}

\theoremstyle{remark}
\newtheorem{remark}{Remark}[section]

\graphicspath{{media/}}     % organize your images and other figures under media/ folder

\begin{document}

\def\spacingset#1{\renewcommand{\baselinestretch}%
{#1}\normalsize} \spacingset{1}

%%%%%%%%%%%%%%%%%%%%%%%%%%%%%%%%%%%%%%%%%%%%%%%%%%%%%%%%%%%%%%%%%%%%%%%%%%%%%%

\title{\bf Efficient and Fast Sampling from Arbitrary Probability Kernels using Sliced Gibbs Sampler}

\author{
  Prithwish Ghosh\thanks{Department of Statistics, North Carolina State University, Raleigh, NC, USA.} 
  \and
 \bf Sujit K. Ghosh\thanks{Department of Statistics, North Carolina State University, Raleigh, NC, USA.}
}

 \maketitle

%\if0\blind
%{
 % \bigskip
  %\bigskip
%  \bigskip
 % \begin{center}
  %  {\LARGE\bf Automated Sliced Gibbs Sampling for Non-Standard Kernels}
%\end{center}
 % \medskip
%} \fi

\begin{abstract}
An Automated Sliced Gibbs (ASG) framework is proposed for fully automated Markov chain Monte Carlo (MCMC) sampling from arbitrary finite-dimensional probability kernels. The method targets unnormalized, non-smooth, heavy-tailed, and highly multimodal densities. A Cauchy transformation–based effective support estimator is combined with slice-driven Gibbs updates. This construction removes the need for user-specified truncation bounds, proposal scales, step-size tuning, or conditional optimization. Unlike existing slice samplers, ASG does not require manually chosen bracket widths or geometric insight into the support. All calibration is performed automatically within each Gibbs cycle.
The resulting Markov chain preserves invariance and ergodicity. Automated support detection allows efficient movement across disconnected high-density regions. The sampler adapts to sharp curvature and irregular geometry without gradient information.
Extensive numerical experiments evaluate performance on complex kernels, including univariate Beta mixtures, multivariate Rosenbrock and Ackley benchmarks, and non-smooth kernels derived from generalized LASSO-type loss functions. Across these challenging settings, ASG consistently achieves higher effective sample size per second and faster decorrelation than Random Walk Metropolis–Hastings, adaptive Gibbs variants, and some recently proposed slice-based methods. The framework provides a scalable and general-purpose solution for sampling from complicated probability kernels where existing algorithms require substantial tuning or exhibit slow mixing.

\end{abstract}

\noindent%
{\it Keywords:}    Bayesian, Computing, Ergodicity, MCMC, Stationarity

\section{Introduction}
\label{sec:intro}
Markov chain Monte Carlo (MCMC) methods have become indispensable tools in modern statistical inference,
enabling researchers to approximate complex posterior distributions that are analytically intractable
\cite{gelfand1990sampling, robert1999monte, brooks2011handbook}.
Since the seminal works of \cite{metropolis1953equation} and \cite{hastings1970monte},
MCMC algorithms have revolutionized Bayesian computation, providing general-purpose procedures
for sampling from high-dimensional target distributions.\\
Among these methods, the Gibbs sampler, introduced in the image analysis context by
\cite{geman1984stochastic} and formally popularized in statistics by
\cite{gelfand1990sampling} and \cite{casella1992explaining},
remains a cornerstone due to its conceptual simplicity and coordinate-wise updating mechanism.
By decomposing a complex joint distribution into a sequence of univariate conditionals,
the Gibbs sampler avoids the need for explicit proposal tuning.
This structure has motivated a vast literature on theoretical properties
\cite{tierney1994markov, roberts1997geometric, cowles1998simulation, roberts2004general, jones2001honest},
as well as algorithmic extensions for non-conjugate and high-dimensional problems
\cite{liu1994covariance, gamerman2006markov, liu2001monte, marshall2009ergodicity}. Despite these advances, classical Gibbs sampling suffers from several well-known limitations.
First, it requires the full conditional distributions to be available in a standard distributional form that are easy to sample from using unnormalized kernels.
For arbitrary kernels—especially unnormalized, multimodal, or geometrically irregular densities—
this method seldom works as it becomes prohibitive to sample easily from the full conditional density kernels, leading to a bottleneck. Moreover, as the Gibbs sampler moves coordinate wise (even when used as a block), it can get trapped in low or high density regions and takes a long time to escape from local modes, making the algorithm mixing very slowly.
The Rosenbrock or ``banana-shaped'' distributions, widely used for MCMC benchmarking
\cite{pagani2022n, haario1999adaptive, roberts2001optimal},
illustrate how naive Gibbs updates may get trapped in curved or narrow probability ridges.
Second, the necessity of manually specifying conditional support bounds severely restricts automation,
especially for high-dimensional or nonstandard target densities.
A wide range of methodological innovations has aimed to mitigate these deficiencies.
Adaptive MCMC methods dynamically adjust proposal scales or covariance structures
during sampling \cite{andrieu2008tutorial, roberts2009examples, neal2012optimal}.
Notably, the Adaptive Metropolis (AM) algorithm of \cite{haario2001adaptive} and subsequent
geometrically adaptive variants \cite{andrieu2008tutorial, roberts2009examples}
achieved considerable success in improving chain mixing,
though they still rely on user-specified tuning heuristics and require differentiability of the log-target.
More recently, gradient-based samplers such as Hamiltonian Monte Carlo (HMC)
\cite{neal2011mcmc, betancourt2017conceptual} and the No-U-Turn Sampler
have gained prominence for their efficiency in smooth continuous spaces,
but their dependence on gradient evaluations limits applicability to non-differentiable or bounded kernels. An alternative line of research centers on \emph{slice sampling},
extensively explored by \cite{neal2003slice}, which introduces auxiliary variables
to sample uniformly from the region under the density curve.
The method avoids explicit step-size tuning and adapts automatically to the target’s local scale.
Slice sampling has since been generalized to stepping-out and shrinking procedures
\cite{neal2003slice}, reflective and elliptical variants for Gaussian process models
\cite{murray2010elliptical}, and adaptive slice methods that modify the bracket width dynamically
\cite{thompson1996adaptive, tibbits2014automated}.
However, these algorithms typically require manually defined search intervals or prior geometric knowledge
of the support, limiting full automation when dealing with arbitrary unnormalized kernels.\\Several works have explored hybrid frameworks combining Gibbs and slice sampling ideas.
The adaptive Gibbs sampler of \cite{liu1994covariance} and the adaptive block schemes sought to enhance Gibbs efficiency by learning local dependence structures.
Other approaches, such as random-direction slice sampling \cite{neal2003slice}
and component-wise adaptive slice updates \cite{tibbits2014automated},
partially automate step selection but still rely on fixed bounding boxes.
From a theoretical standpoint, ergodicity and convergence analyses for these samplers
have been studied extensively \cite{tierney1994markov, roberts2004general, cowles1998simulation}.
Despite these developments, a unified, theoretically valid, and \emph{fully automated} sampling
framework for arbitrary multivariate kernels remains elusive.\\We propose an \emph{{Automated Sliced Gibbs(ASG)} sampling framework} that overcomes the support-bound bottleneck by incorporating adaptive support estimation with univariate slice updates. The key idea is to normalize the kernel over its \emph{effective support}, automatically identified via a Cauchy-based transformation (similar to activation function in neural net), and to use {efficient} rejection-based slice sampling to ensure efficient exploration of complex densities. \\
This article provides general-purpose MCMC method designed to sample efficiently from arbitrary unnormalized probability kernels without requiring any manual specification of support bounds.
The ASG sampler integrates three key innovations, \textbf{automatic \emph{effective support estimation}} using a Cauchy-based reparameterization of $\mathbb{R}$, which identifies the bounded region of high probability mass with a user-defined tolerance, \textbf{dimension-wise slice updates embedded within a Gibbs sampling cycle,}
which ensure invariance and ergodicity while adapting locally to curvature and multimodality, and \textbf{integrated convergence diagnostics} and effective sample size (ESS) assessment to quantify efficiency in practical implementations. It also generalizes both Gibbs and slice sampling paradigms. By autonomously identifying finite support intervals via heavy-tailed Cauchy transforms, the algorithm avoids the step-size and boundary tuning issues that plague traditional slice samplers. Moreover, by embedding the slice step within a Gibbs structure, the ASG sampler inherits conditional invariance and supports high-dimensional modular updates. The method thus provides a robust, theoretically justified, and empirically efficient sampling framework applicable to a wide variety of complex, multimodal, or unnormalized densities.\\
For a quick illustration, Figure~\ref{com_densi} provides a comparison of the performance of the proposed ASG sampler with a standard {Random Walk Metropolis-Hastings} (RW-MH) algorithm on a univariate mixture of beta distributions. The RW-MH chain fails to adapt to the multimodality and potentially unbounded kernel, resulting in poor density estimation based on the samples generated. In contrast, the proposed ASG sampler very quickly generates samples from the entire support of the density.\\
\begin{figure}[!ht]
    \centering
     \begin{subfigure}[b]{0.45\linewidth}
        \includegraphics[width=\linewidth]{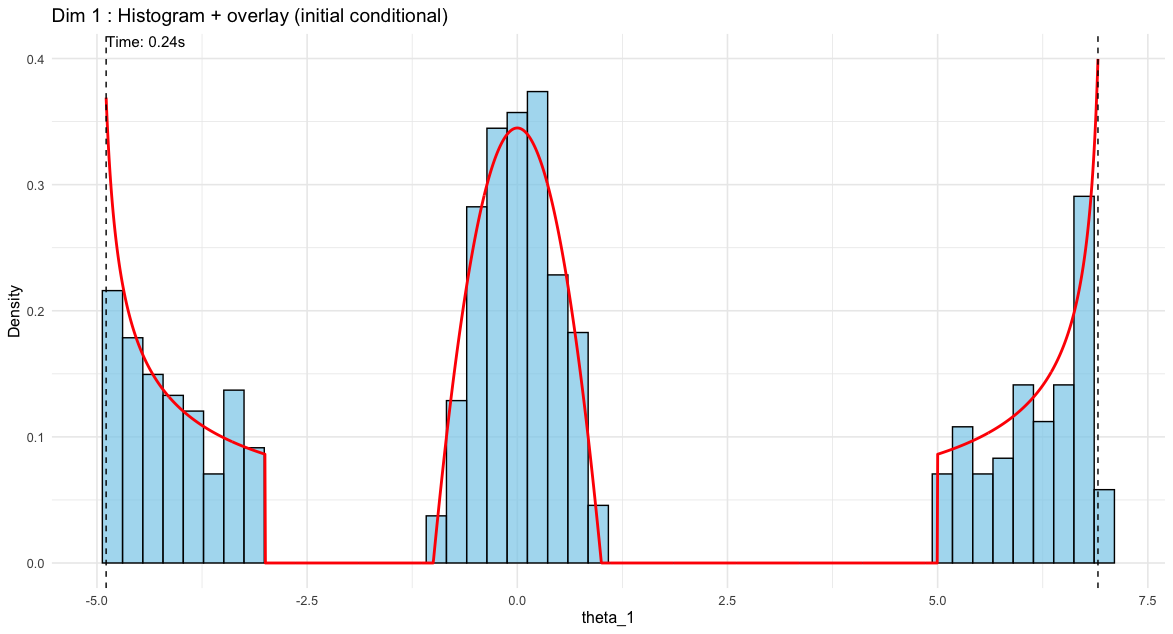}
        \caption{The plot shows the empirical histogram (in lightblue) based on samples from the ASG sampler with true density (red line) overlaid.}
        \label{proposed_density}
    \end{subfigure}
    \hfill
     \begin{subfigure}[b]{0.45\linewidth}
        \includegraphics[width=\linewidth, height = 4cm]{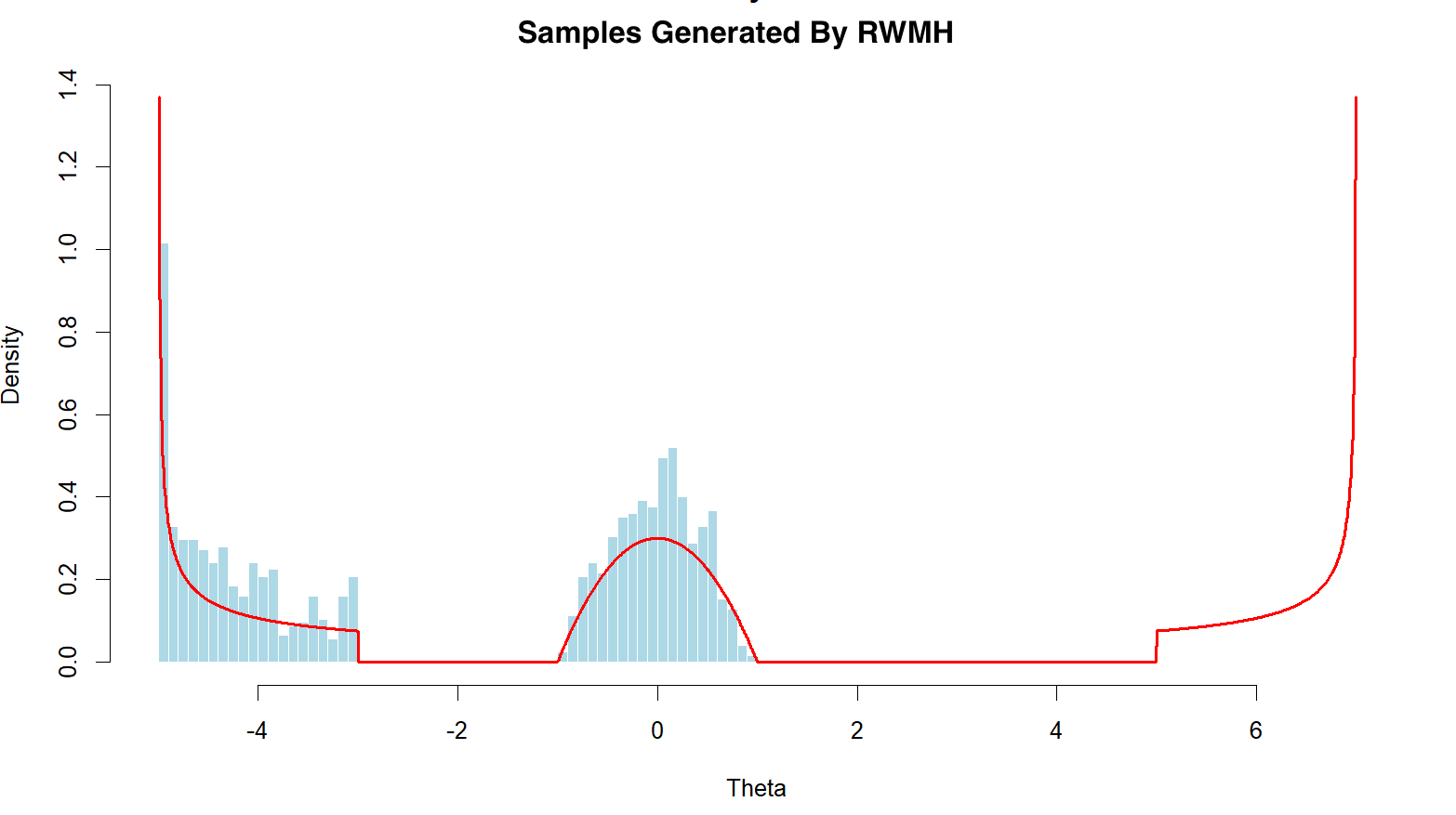}
        \caption{The plot shows the empirical histogram (in lightblue) based on samples from the RW-MH with true density (in red) overlaid. }
    \end{subfigure}
\caption{Comparison of proposed ASG vs. RW-MH for the univariate mixture density (see Section~\ref{univariate} for details). Notice that no samples are generated from the third component of the mixture density by the RW-MH in the figure (b).}
\label{com_densi}
\end{figure}
The remainder of the paper is organized as follows. Section~\ref{sec:meth} introduces the automated slice sampling framework for unnormalized kernels. It presents the effective support estimation strategy, develops the bounded univariate slice updates, and establishes invariance and ergodicity in both univariate and multivariate Gibbs settings. Diagnostic tools and convergence results are summarized. In section~\ref{sec:station} we showed the stationarity and geometric Ergodicity of our proposed sampler(ASG), and in section~\ref{sec:benchmark} effective sample size is introduced as the primary benchmarking metric.
Section~\ref{sec:examples} reports empirical results based on fixed-sample experiments conducted under a standardized computational environment. Performance is evaluated across a diverse collection of challenging targets, including multimodal mixtures, classical optimization benchmarks and in Section~\ref{sec:non smooth} present results based on various non-smooth loss-derived kernels. Across these settings, the proposed sampler consistently delivers substantially higher effective sample size per second than Random Walk Metropolis–Hastings, while maintaining comparable computational cost. Finally, general conclusions are presented in Section~\ref{sec:conclusion}.

\section{Proposed Automated Sliced Gibbs Sampler}
\label{sec:meth}
This section develops a mathematically rigorous framework for Markov chain Monte Carlo (MCMC) sampling from unnormalized multivariate density functions, with particular emphasis on kernels defined on either bounded or unbounded supports. The proposed methodology integrates three core components: effective support estimation, a bounded–proposal variant of univariate slice sampling, and multivariate Gibbs sampling. Diagnostic tools are incorporated to facilitate empirical assessment of convergence and mixing behavior. The methodology can be easily implemented in any coding platform and is both dimension-agnostic and support-agnostic. The exposition proceeds in two stages. The effective support determination is first developed for the univariate case. The construction is then extended to arbitrary dimensions through a Gibbs sampling scheme based on conditional slice updates. It is shown algorithmically below~\ref{alg:asg-journal}
%In this section, we delineate a mathematically rigorous framework for Markov chain Monte Carlo (MCMC) sampling from unnormalized multivariate density functions, emphasizing kernels with potentially unbounded or bounded support. The methodology integrates effective support estimation, a bounded-proposal variant of univariate slice sampling, and multivariate Gibbs sampling, augmented with diagnostic instruments for empirical validation of convergence and mixing. Implemented in R, the approach is dimension-agnostic and support-agnostic. Let \( K: \mathbb{R}^m \to [0, \infty) \) denote the unnormalized target kernel, with normalized density \( p(\boldsymbol{x}) = K(\boldsymbol{x}) / Z \), where \( Z = \int_{\mathbb{R}^m} K(\boldsymbol{x}) \, d\boldsymbol{x} < \infty \) is the partition function {assumed to be finite. First we describe the effective support determination for the univariate case with m = 1. Then we generate the sampler to arbitrary dimensions m$\geq$ 1}

\subsection{Sliced Gibbs Sampler: Formulation and Intuition}
Let $K(x)$, $x\in\mathbb{R}^m$, denote a nonnegative kernel function (with respect to a dominating measure, e.g., Lebesgue measure) satisfying
\begin{equation}
K(x)\ge 0,\qquad Z :=\int_{\mathbb{R}^m} K(x)\,dx <\infty.
\label{eq: K(x)}
\end{equation}
Our goal is to generate (possibly correlated) samples from the normalized density
\begin{equation}
p(x)= \frac{K(x)}Z,
\label{eq: p(x)}
\end{equation}
where the normalizing constant $Z$ may not available in many practical applications. Moreover, $K$ may be \emph{multimodal}, making traditional Gibbs or Metropolis--Hastings algorithms prone to ``mode–trapping.'' The \emph{slice sampling} framework circumvents this difficulty by augmenting the state space and sampling from regions defined by the level sets of the kernel. This naturally enables \emph{large jumps across low–density valleys}, allowing the chain to transition efficiently between modes.

\subsubsection*{Augmentation by a Slice Variable}
Introduce an auxiliary variable $u>0$ and consider the joint kernel
\[
H(u, x)=\mathbb{I}\{K(x)>u\}.
\]
Marginalizing out $u$, we recover the target kernel:
\[
\int_0^\infty H(x,u)\,du = K(x).
\]
Thus, if we can sample from the normalized version of $H(x,u)$, the marginal samples of $x$ will be sampled from the target density $p(x)$ given in eq.(\ref{eq: p(x)}).
\noindent The standard slice sampling strategy alternates between:
\begin{itemize}
    \item[(i)] drawing uniformly a slice height $u$ beneath $K(x)$, i.e., $u\sim Unif(0, K(x))$, and
    \item[(ii)] drawing a new state $x$ uniformly from the region where $K(x)$ exceeds this height, i.e. uniformly from $\mathcal{S}_u := \{x\in\mathbb{R}^m: K(x)>u\}$ for a $u$ sampled in step (i) above.
\end{itemize}
This construction makes it possible to traverse disconnected or narrow high–density regions because the algorithm samples \emph{within horizontal slices} of the density rather than following the curved geometry of the density's conditional distributions. However, it is well known the step (ii) of the above algorithm is difficult to accomplish for a general kernel function $K(x)$ in higher dimensions. In order to mitigate the problem of generating samples uniformly from complex high-dimensional set $\mathcal{S}_u$ for a given $u>0$, we propose a sliced version of the Gibbs sampler.

\subsubsection*{The Sliced Gibbs Sampler}
Let $\mathcal{S}_1(x_2, x_3, \ldots, x_m; u)=\{z\in\mathbb{R}: K(z,x_2,\ldots,x_m)>u\}$ denote level set for the $x_1$ conditional on rest of the coordinates for a given height $u>0$ and more generally for $j=1,2,\ldots$, let the level set for the $j$-th coordinate, $x_j$, conditional on rest of the coordinates $x_1,\ldots,x_{j-1},x_{j+1},\ldots, x_m$  and the height $u$ be denoted by
\begin{equation}
\mathcal{S}_j(x_1,\ldots,x_{j-1},x_{j+1},\ldots, x_m; u)=\{z\in\mathbb{R}: K(x_1,\ldots, x_{j-1}, z, x_{j+1},\ldots, x_m)>u\}.
\label{eq:levelset}
\end{equation}
Notice that $\mathcal{S}_m(x_1, x_2, \ldots, x_{m-1}; u)=\{z\in\mathbb{R}: K(x_1,x_2,\ldots,x_{m-1}, z)>u\}$ is the level set for $x_m$ conditional on the first $(m-1)$ coordinates.\\ 
Also, let $A_j(x_1,\ldots,x_{j-1},x_{j+1},\ldots,x_m; u)$ denote the Lebesgue measure of the $j$-th level set given in eq.(\ref{eq:levelset}). More specifically notice that 
\begin{equation}
A_j(x_1,\ldots,x_{j-1},x_{j+1},\ldots,x_m; u) = \int \mathbb{I}(K(x_1,\ldots, x_{j-1}, z, x_{j+1},\ldots, x_m)>u)\,dz
\label{eq:lebesgue}
\end{equation}
Let $\text{Unif}\big(S_j(x_1,\dots,x_{j-1},x_{j+1},\ldots,x_m; u)\big)$ denote the (conditional) uniform distribution over the $j$-th level set $S_j(x_1,\dots,x_{j-1},x_{j+1},\ldots,x_m; u)$ given the rest of the values $x_1,\dots,$ $x_{j-1},x_{j+1},\ldots,x_m$ and a given height $u>0$.\\
Starting with an initial value $x^{(0)}=(x_1^{(0)},\dots,x_m^{(0)}) \quad$ such that $K(x^{(0)})>0$, we start drawing new height and components sequentially for $t=1,2,\ldots$, as follows:

\begin{equation}
\begin{aligned}
u^{(t)} &\sim \text{Unif}\big(0,K(x^{(t-1)})\big)\propto \mathbb{I}(K(x^{(t-1)})>u\}.\\
x_1^{(t)} &\sim \text{Unif}\big(S_1(x_2^{(t-1)},\dots,x_m^{(t-1)}; u^{(t)}\big) \propto \mathbb{I}\{K(x_1,x_2^{(t-1)},\dots,x_m^{(t-1)}) > u^{(t)}\},\\
&\ \vdots\\
x_j^{(t)} &\sim \text{Unif}\big(S_j(x_1^{(t)},\dots,x_{j-1}^{(t)},x_{j+1}^{(t-1)},\ldots,x_m^{(t-1)}; u^{(t)}\big)\\ 
& \hspace{1.5in} \propto \mathbb{I}\{K(x_1^{(t)},\dots,x_{j-1}^{(t)}, x_j, x_{j+1}^{(t-1)},\ldots,x_m^{(t-1)} > u^{(t)}\},\\
& \vdots\\
x_m^{(t)} &\sim \text{Unif}\big(S_m(x_1^{(t)},\dots,x_{m-1}^{(t)}; u^{(t)}\big) \propto \mathbb{I}\{K(x_1^{(t)},\dots,x_{m-1}^{(t)}, x_m) > u^{(t)}\}.
\end{aligned}
\label{eq:sgs}
\end{equation}

The main computational challenge is sampling from the $j$th full conditional given above in eq. (\ref{eq:sgs}). Next, we propose a generic automated method to generate a sample from $j$-th level set given in eq.(\ref{eq:levelset}). For notational convenience, we suppress the conditional values of the rest of the coordinates and denote the generic conditional kernel by $g(z) = K(x_1,\ldots,x_{j-1},z,x_{j+1},\dots,x_m)$. The problem then reduces to generating samples uniformly for the conditional level set $S_u = \{z\in\mathbb{R}: g(z)>u$\}. 

\begin{comment}
\paragraph{Why the ASG Sampler Effectively Handles Multimodal and Irregular Densities?}
A key strength of the ASG sampler lies in its reliance on horizontal slice regions rather than traditional conditional densities. When updating a single coordinate, the sampler draws uniformly from the entire segment of values whose one-dimensional conditional kernel exceeds the current slice height. Because these segments are determined by level sets of the joint kernel, they may span multiple modes or extend across low-density valleys. This enables the sampler to make long-range moves in a single coordinate update, bypassing narrow ridges or deep troughs that would trap a conventional Gibbs or Metropolis–Hastings sampler. In effect, the slice-based perspective smooths out the geometric irregularities of the target density by treating all points on the same horizontal contour as equally likely, allowing the chain to move freely across separated high-probability regions. This mechanism explains the sampler’s ability to maintain rapid decorrelation even for densities with heavy tails, strong nonlinear dependencies, disconnected modes, or sharply curved features—scenarios where many general-purpose MCMC algorithms struggle. Similar observations have been made for hit-and-run based hybrid slice samplers which seems to provide better rate of convergence that RW-MH algorithms (see \cite{rudolphullrich2018}).
\end{comment}

\begin{algorithm}[!ht]
\caption{Automated Sliced Gibbs (ASG) Sampler for an Unnormalized Kernel on $\mathbb{R}^m$}
\label{alg:asg-journal}
\scriptsize
\begin{algorithmic}[1]
\Require Kernel $K:\mathbb{R}^m\to[0,\infty)$ with $\int_{\mathbb{R}^m}K(x)\,dx<\infty$.
\Require Initial state $x^{(0)}\in\mathbb{R}^m$ such that $K(x^{(0)})>0$.
\Require Total retained samples $N\in\mathbb{N}$, burn-in $B\ge 0$, thinning $L\ge 1$.
\Require Effective-support tolerance $\varepsilon\in(0,1)$, initial scale parameter $s_0>0$.
\Ensure Approximate draws $\{x^{(B+L)},x^{(B+2L)},\ldots,x^{(B+NL)}\}$ from $p(x)\propto K(x)$.

\State $T \gets B + NL$ where T is the total MCMC iterations
\State Initialize storage $\{x^{(t)}\}_{t=0}^{T}$ with $x^{(0)}$ given.
\For{$t=1$ to $T$}
    \State \textbf{Slice-height update:} draw $u^{(t)} \sim \mathrm{Unif}\!\big(0,\,K(x^{(t-1)})\big)$.
    \For{$k=1$ to $m$}
        \State Define the one-dimensional conditional kernel
        \[
        g_k(x_k^\star) \;:=\; K\big(x_1^{(t)},\ldots,x_{k-1}^{(t)},\,x_k^\star,\,x_{k+1}^{(t-1)},\ldots,x_m^{(t-1)}\big).
        \]
        \State \textbf{Automated bracketing:} obtain a bracket $[a_k, b_k]$ such that $S_k(u^{(t)})\subseteq [a_k,b_k]$:
        \[
        [a_k,b_k] \gets \textsc{EffectiveSupport1D}\big(g_k(\cdot),\,\varepsilon,\,s_0;\,[a_k^{(t-1)},b_k^{(t-1)}]\,\big).
        \]
        \State \textbf{Uniform draw on the slice:}
        \[
        x_k^{(t)} \gets \textsc{Slice1D\_FixedU}\big(g_k(\cdot),\,u^{(t)},\,a_k,\,b_k\big).
        \]
        \State Update warm-start bracket: $[a_k^{(t)},b_k^{(t)}]\gets[a_k,b_k]$.
    \EndFor
\EndFor

\State \textbf{Return} the thinned post-burn-in subsequence $\{x^{(B+L)},x^{(B+2L)},\ldots,x^{(B+NL)}\}$.
\end{algorithmic}
\end{algorithm}

\subsection{Generic Method for Uniformly Sampling from 1D Slices}
\label{sec: 1dslice}
Given a generic kernel $g(z)$ on $\mathbb{R}$ and a slice height $u>0$, define the level set
\[
\mathcal{S}_u=\{z\in\mathbb{R}: g(z)>u\}.
\]
Its Lebesgue measure
\[
A(u) = \int \mathbb{I}\{g(z)>u\}\,dx
\]
is the ``width'' of the slice. Uniform sampling from $\mathcal{S}_u$ yields the desired conditional distribution. Our algorithm employs an efficient envelope–based adaptive support–search method to compute an {\em effective support} $[a,b]$ for a given $u>0$ such that $\mathcal{S}_u \subseteq [a,b]$, and then samples uniformly within that region. Notice that if $g(z)$ is unbounded, we may have to find the largest $a$ and smallest $b$ such that $\int_a^b g(z) dz\geq (1-\epsilon)\int g(z) dz$, for a chosen very small tolerance $\epsilon>0$. It easily follows that $A(u)$ is a decreasing function for $u>0$ and $\int_0^\infty A(u) du = \int g(z) dz<\infty.$

\subsubsection*{Effective Support Estimation for a Univariate Kernel}
\label{effe_supp}

Let $K:\mathbb{R}\to[0,\infty)$ denote an unnormalized probability kernel. Our objective is to determine an \emph{effective support interval} $[a(\varepsilon), b(\varepsilon)]$ that contains probability mass $1-\varepsilon$ under the normalized density induced by $K(\cdot)$, with equal tail probabilities of size $\varepsilon/2$ on the left and right. Formally, letting 
\[
Z=\int_{-\infty}^{\infty} K(x)\,dx,\qquad f(x)=\frac{K(x)}{Z},
\qquad F(x)=\int_{-\infty}^x f(t)\,dt,
\]
we seek the largest $a=a(\varepsilon)\in\mathbb{R}$ and the smallest $b=b(\varepsilon)>a(\varepsilon)$ such that
\begin{equation}\label{eq:eff}
\int_a^b K(x)\,dx \;\ge\;(1-\varepsilon)\int_{-\infty}^{\infty}K(x)\,dx,
\end{equation}
with the additional equal–tailed requirements
\[
\int_{-\infty}^{a(\varepsilon)} K(x)\,dx \;\le\; \frac{\varepsilon}{2} Z,
\qquad 
\int_{b(\varepsilon)}^{\infty} K(x)\,dx \;\le\; \frac{\varepsilon}{2} Z.
\]

\subsubsection*{Cauchy Transformation and Support Extraction}

Direct evaluation of $F$ and its inverse can be numerically unstable, especially for heavy-tailed or sharply peaked kernels. To stabilize computation, we employ a heavy-tailed change of variables based on a Cauchy distribution with scale parameter $s_0>0$. Let $F_C$, $f_C$, and $Q_C(u)=F_C^{-1}(u)$ denote the distribution function, density, and quantile function, respectively of a Cauchy distribution. The probability integral transform establishes a smooth bijection
\[
x = Q_C(u), \qquad u = F_C(x), \qquad u\in(0,1).
\]
Define the transformed kernel
\[
K^\star(u) \;=\; \frac{K(Q_C(u))}{f_C(Q_C(u))}\,.
\]
A standard change of variables yields
\[
Z = \int_{-\infty}^{\infty} K(x)\,dx
  = \int_0^1 K^\star(u)\,du,
\]
so that $f^\star(u)=K^\star(u)/Z$ defines a valid density on $(0,1)$ and
\[
F(x)
= \int_{-\infty}^x f(t)\,dt
= \int_0^{F_C(x)} f^\star(u)\,du
=:F^\star(F_C(x)).
\]
Hence, the equal–tailed bounds are obtained by solving
\[
F^\star(u_a)=\frac{\varepsilon}{2},
\qquad
F^\star(u_b)=1-\frac{\varepsilon}{2},
\]
and mapping the solutions back to $\mathbb{R}$ via
\[
a = Q_C(u_a), 
\qquad 
b = Q_C(u_b).
\]
The resulting interval $[a,b]$ is therefore an equal–tailed credible region of mass $1-\varepsilon$ for the density induced by $K(\cdot)$. The Cauchy transformation is particularly advantageous for kernels with heavy tails or extreme skewness, as its own heavy-tailed nature distributes probability mass more evenly across $(0,1)$ and mitigates numerical underflow. Other monotone transforms with sufficiently heavy tails may be used in principle, though the Cauchy map has proved especially stable and is widely used in Monte Carlo reparameterization methods \cite{robert1999monte,robert2010introducing}.

\subsubsection*{Numerical Implementation}

In practice, we evaluate the normalizing constant $z=\int_0^1 K^\star(u)\,du$ and the transformed distribution function $F^\star(u)=\int_0^u f^\star(s)\,ds$ using adaptive quadrature on $(0,1)$ \cite{mcnamara2009integrating}. The one-dimensional root-finding problems $F^\star(u)=\varepsilon/2,
1-F^\star(u)=\varepsilon/2$
are solved via bounded scalar optimization \cite{anderson2020some}. If the Cauchy-transformed quadrature becomes unstable (e.g., for kernels exhibiting pathological behavior on $(0,1)$; see Fig.~\ref{com_densi}), we revert to a fallback strategy: we identify a finite region on $\mathbb{R}$ where $K(x)$ is numerically non-negligible on a grid, integrate $K$ directly over that region to estimate $z$, construct $F$ numerically, and compute the equal–tailed bounds in the original domain.\\
Algorithm~\ref{algo:effective} summarizes the complete effective support estimation procedure. The algorithm returns the bounds $a$ and $b$, the normalized density $f$, and normalizing constant $z$.
\begin{algorithm}[!ht]
\caption{Effective Support Computation for a Kernel Function}
\label{algo:effective}
\scriptsize
\begin{algorithmic}[1]
\Function{EffectiveSupport}{$g$, $\epsilon = 0.01$, $\text{s}_0 = 1$}
    \State Define $g(u) \gets \text{Vectorized } \text{ker}(u)$
    \State Define $g^*(u) \gets \frac{g(x)}{f_C(x)}$ where $x = \text{quantile\_Cauchy}(u, \text{s}_0)$, $f_C$ is Cauchy density
    \State Compute normalizing constant $Z \gets \int_0^1 g^*(u) \, du$
    \If{$z$ is finite and $z > 0$}
        \State Define $f^*(u) \gets \frac{g^*(u)}{Z}$
        \State Define $F^*(u) \gets \int_0^u f^*(t) \, dt$
        \State Define $f(x) \gets \frac{g(x)}{Z}$
        \State Define $F(x) \gets F^*(\text{cdf\_Cauchy}(x, \text{s}_0))$
        \State Define $B_{\text{lower}}(u) \gets F^*(u) - 0.5 \cdot \epsilon$
        \State Define $B_{\text{upper}}(u) \gets 1 - F^*(u) - 0.5 \cdot \epsilon$
        \State Find $a^* \gets \text{root of } B_{\text{lower}}(u)$ in $[0, 0.5]$
        \State Find $b^* \gets \text{root of } B_{\text{upper}}(u)$ in $[\max(a^*, 0.5), 1]$
        \State $a \gets \text{quantile\_Cauchy}(a^*, \text{s}_0)$
        \State $b \gets \text{quantile\_Cauchy}(b^*, \text{s}_0)$
    \Else
        \State Define test points $x_{\text{test}} \gets$ linspace from $-100$ to $100$
        \State Find support indices where $g(x) > 10^{-10}$
        \State Let $a_{\text{approx}}$, $b_{\text{approx}}$ be min and max of $x_{\text{test}}$ at support indices
        \State Let $a_{\text{test}} \gets a_{\text{approx}}$, $b_{\text{test}} \gets b_{\text{approx}}$
        \State Compute $Z \gets \int_{a_{\text{test}}}^{b_{\text{test}}} g(x) \, dx$
        \State Define $f(x) \gets \frac{g(x)}{Z}$
        \State Define $F(x) \gets \int_{a_{\text{test}}}^x f(t) \, dt$
        \State Define $B_{\text{lower}}(x) \gets F(x) - 0.5 \cdot \epsilon$
        \State Define $B_{\text{upper}}(x) \gets 1 - F(x) - 0.5 \cdot \epsilon$
        \State Find $a \gets \text{root of } B_{\text{lower}}(x)$ in $[a_{\text{test}}, b_{\text{test}}]$
        \State Find $b \gets \text{root of } B_{\text{upper}}(x)$ in $[a, b_{\text{test}}]$
    \EndIf
    \State Attach attributes to $f$: norm.const $= Z$, lower $= a$, upper $= b$
    \State \Return $\{ \text{lower}: a, \text{upper}: b, f: f, \text{norm.const}: Z \}$
\EndFunction
\end{algorithmic}
\end{algorithm}

%%%%%%%%%%%%%%%%%%%%%%%%%%%%%%%%%%%%%%%%%%%%%%%%%%%%%%%%%%%%
\section{Stationarity and Geometric Ergodicity of the ASG}\label{sec:station}
%%%%%%%%%%%%%%%%%%%%%%%%%%%%%%%%%%%%%%%%%%%%%%%%%%%%%%%%%%%%
In this section we establish the fact that the ASG sampler satisfies the detailed balance equation establishing the stationarity of the Markov Chain generated by the ASG. In addition, under some additional regularity conditions we establish the geometric ergodicity of the ASG sampler.
In particular, we will show that the ASG sampler has the following as its stationary (invariant) density:
\begin{equation}
H(u, x_1, \cdots, x_m)=\frac{\mathbb{I}\{K(x_1, \cdots, x_m)>u\}}{Z}
\label{eq:stationary} 
\end{equation}
which, in turn ensures the marginal chain in $x$ has stationary distribution $p(x)$.

\subsubsection*{Transition kernel of the ASG sampler}
Let $(u,x_1, \cdots, x_m)~\sim H(u, x_1, \cdots, x_m)$ and consider generating $(\tilde{u}, \tilde x_1,\ldots,\tilde x_m)$ using the ASG sampler described in eq. (\ref{eq:sgs}). Notice that
\begin{align*}
\tilde u & \sim \frac{\mathbb{I}\{K(x_1, \cdots, x_m)>\tilde u\}}{K(x_1,\cdots,x_m)} = Unif(0, K(x_1,\cdots, x_m)),\\
\tilde x_1 & \sim \frac{\mathbb{I}(K(\tilde x_1, x_2, \cdots, x_m)>\tilde u)}{A_1(x_2, \cdots, x_m, \tilde u)}~\text{where~} A_1(x_2,\cdots, x_m, \tilde u) = \int \mathbb{I}(K(x_1,x_2,\cdots, x_m)>\tilde u)dx_1
\end{align*}
More generally for $j = 2,\cdots,m-1$, we sample
\begin{align*}
\tilde x_j & \sim \frac{\mathbb{I}(K(\tilde x_1, \cdots \tilde x_j, x_{j+1} \cdots, x_m)>\tilde u)}{A_j(\tilde x_1, \tilde x_{j-1}, x_{j+1} \cdots, x_m, \tilde u)}\\
&\text{where~} A_j(\tilde x_1, \cdots \tilde x_{j-1}, x_{j+1} \cdots, x_m, \tilde u) = \int \mathbb{I}(K(\tilde x_1,\cdots \tilde x_{j-1}, x_j, x_{j+1}\cdots, x_m)>\tilde u)dx_j,
\end{align*}
and finally we sample,
\[
 \tilde x_m \sim \frac{\mathbb{I}(K(\tilde x_1, \cdots, \tilde x_m)>\tilde u)}{A_m(\tilde x_1, \cdots, \tilde x_{m-1}, \tilde u)}~
\text{,where~} A_m(\tilde x_1, \cdots, \tilde x_{m-1}, \tilde u) = \int \mathbb{I}(K(\tilde x_1,\cdots, \tilde x_{m-1}, x_m)>\tilde u)dx_m
\]
Thus the transition kernel of the ASG sampler is given by,
\begin{equation}\label{eq:trans}
\begin{split}
T((u,x_1, \cdots, x_m)\to (\tilde{u},\tilde{x}_1, \cdots \tilde x_m)) =\quad\quad\quad\quad\quad\quad\quad\quad\quad\quad\quad\quad\\
\frac{\mathbb{I}(K(x_1, \cdots, x_m)> \tilde{u})}{K(x_1, \cdots, x_m)}
\prod_{j=1}^m
\frac{\mathbb{I}\{K(\tilde{x}_1,\dots,\tilde{x}_{j}, x_{j+1}\dots,{x}_m)> \tilde{u}\}}
{A_j(\tilde{x}_1,\dots,\tilde{x}_{j-1}, x_{j+1}\dots,{x}_m, \tilde u)}
\end{split}
\end{equation}
Next we establish the stationarity of the Markov Chain corresponding to the transition kernel given by (\ref{eq:trans}) and show that it's stationary distribution if the joint kernel of $(u, x)$ given by (\ref{eq:stationary}).
\setcounter{theorem}{0}
\begin{theorem}[Stationarity of SG sampler]
Consider the SG sampler given in (\ref{eq:sgs}) based on any arbitrary target kernel given in (\ref{eq: K(x)}). Then the corresponding transition kernel given by (\ref{eq:trans}) of produces a Markov Chain with stationary kernel given by (\ref{eq:stationary}).
\label{th:stationary}
\end{theorem}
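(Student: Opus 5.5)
The plan is to show invariance directly: if $(u,x)\sim H$ in (\ref{eq:stationary}), then $(\tilde u,\tilde x)$ produced by (\ref{eq:trans}) also has density $H$. I will not integrate the full product (\ref{eq:trans}) against $H$ in one shot. Instead I will write $T$ as a composition of $m+1$ elementary kernels and show that each one leaves $H$ invariant. The first kernel refreshes $u$; the $j$-th subsequent kernel refreshes coordinate $x_j$ with everything else held fixed. Because invariance is preserved under composition of Markov kernels, $H$ is then invariant for $T$.

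\textbf{Height refresh.} Start from $H(u,x)=\mathbb{I}\{K(x)>u\}/Z$. Integrating out $u$ gives the $x$-marginal $K(x)/Z=p(x)$, and the conditional of $u$ given $x$ is $\mathbb{I}\{K(x)>u\}/K(x)$, i.e.\ $\mathrm{Unif}(0,K(x))$. Drawing $\tilde u$ from this law while keeping $x$ is an exact Gibbs step. Hence $(\tilde u,x)$ has density
\[
p(x)\,\frac{\mathbb{I}\{K(x)>\tilde u\}}{K(x)}=H(\tilde u,x).
\]

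\textbf{Coordinate updates.} Suppose the current state $(\tilde u,\tilde x_1,\dots,\tilde x_{j-1},x_j,\dots,x_m)$ has density $H$. Fix all coordinates except $x_j$. Then $H$, as a function of $x_j$, is proportional to $\mathbb{I}\{K(\cdot)>\tilde u\}$. Its integral over $x_j$ is
\[
A_j(\tilde x_1,\dots,\tilde x_{j-1},x_{j+1},\dots,x_m,\tilde u)/Z,
\]
with $A_j$ as in (\ref{eq:lebesgue}), so the full conditional of $x_j$ is exactly the $j$-th factor of (\ref{eq:trans}). Replacing $x_j$ by a draw $\tilde x_j$ from its exact conditional therefore preserves the joint law. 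Concretely,
\[
\int \frac{\mathbb{I}\{K(\dots,x_j,\dots)>\tilde u\}}{Z}\cdot\frac{\mathbb{I}\{K(\dots,\tilde x_j,\dots)>\tilde u\}}{A_j}\,dx_j=\frac{\mathbb{I}\{K(\dots,\tilde x_j,\dots)>\tilde u\}}{Z},
\]
since the integral of the first indicator in $x_j$ is precisely $A_j$, which cancels the denominator. Inducting on $j=1,\dots,m$ gives that $(\tilde u,\tilde x)\sim H$. This is (\ref{eq:stationary}), and marginalizing over $\tilde u$ yields $p$.

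\textbf{Main obstacle: well-definedness of the conditionals.} Each $A_j$ must satisfy $0<A_j<\infty$ almost everywhere under the current law. Finiteness follows from Fubini and $Z<\infty$: for almost every value of the other coordinates and of $\tilde u$, the map $x_j\mapsto\mathbb{I}\{K>\tilde u\}$ is integrable. Positivity holds on the support, because the current value $x_j$ lies in an open-in-$u$ slice; that is, $K(\text{current})>\tilde u$ holds $H$-a.s. A little care is needed to state that the set where the slice has zero Lebesgue measure is $H$-null. This follows since $\int A_j\,dx_{-j}\,d\tilde u$ reproduces the $H$-mass of those points, and where $A_j=0$ that mass is $0$.

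A second subtlety is that the algorithm's bracket $[a_k,b_k]$ is an effective support. The argument above assumes it contains the slice $S_k(\tilde u)$, as stated in Algorithm~\ref{alg:asg-journal}, so that uniform sampling on the bracket restricted to the slice equals $\mathrm{Unif}(S_k)$. I will therefore state the theorem for the idealized SG kernel (\ref{eq:trans}), as written.
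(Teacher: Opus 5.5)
Your proof is correct, but it organizes the computation differently from the paper. The paper never isolates the individual updates. It writes out $T\cdot H$ in full and regroups the indicators: it pairs $\mathbb{I}\{K(x)>u\}$ from $H$ with the factor $1/K(x)$, and it shifts each indicator $\mathbb{I}\{K(\tilde x_1,\dots,\tilde x_{j-1},x_j,\dots,x_m)>\tilde u\}$ so that it sits over $A_j$. In $T$ that indicator belongs to the height step for $j=1$ and to the $(j-1)$-th coordinate step otherwise. Integrating in the order $u,x_1,\dots,x_m$ then reduces every regrouped factor to $1$ and leaves $\mathbb{I}\{K(\tilde x)>\tilde u\}/Z$.

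Your factorization of $T$ into $m+1$ exact Gibbs kernels evaluates the same iterated integral, but each stage becomes a self-contained invariance check. That buys three things:
\begin{itemize}
\item no index-shifting regrouping is needed;
\item the argument does not depend on the order in which coordinates are visited, so it also covers the random-scan variant in the remark following the theorem;
\item it shows exactly where null sets enter.
\end{itemize}
On the last point, the paper's assertion that each regrouped factor integrates to $1$ tacitly requires $0<A_j<\infty$. Your Fubini computation $\int A_j\,dx_{-j}\,d\tilde u=Z$ is what shows that $\{A_j=0\}$ and $\{A_j=\infty\}$ are $H$-negligible. Of your two justifications for $A_j>0$, only that computation does any work: the current point lying in the slice does not by itself give the slice positive Lebesgue measure.

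The paper's route is more compact; yours is more modular and more careful. Your closing caveat is also apt. An equal-tailed interval of mass $1-\varepsilon$ need not contain the whole slice $\{g>\tilde u\}$ when $\tilde u$ is small. So the theorem, and both proofs, concern the idealized kernel rather than the truncated one that is actually implemented.
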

\begin{proof}
To establish the stationarity of the ASG we need to show the following:
\begin{equation}\label{eq:eq}
\begin{split}
\int T((u,x_1, \cdots, x_m)\to (\tilde{u},\tilde{x}_1, \cdots \tilde x_m))H(u,x_1, \cdots,x_m) du~dx_1\cdots dx_m\\  = H(\tilde u, \tilde x_1, \cdots, \tilde x_m) \quad \quad
\end{split}
\end{equation}
Notice that the integrand can be expressed as
\[
\frac{\mathbb{I}(K(x_1,x_2,\ldots,x_m)>u)}{K(x_1,\ldots,x_m)}\times 
\prod_{j=1}^{m}
\frac{\mathbb{I}\{K(\tilde{x}_1,\dots,\tilde{x}_{j-1}, x_{j}\dots,{x}_m)\ge \tilde{u}\}}
{A_j(\tilde{x}_1,\dots,\tilde{x}_{j-1}, x_{j+1}\dots,{x}_m, \tilde u)}\times  \frac{\mathbb{I}(K(\tilde x_1, \cdots, \tilde x_m)> \tilde u)}{Z}
\]
Clearly, by using eq.(\ref{eq:lebesgue}), we can see that
\[
\int \frac{\mathbb{I}(K(x_1,x_2,\ldots,x_m)>u)}{K(x_1,\ldots,x_m)} du =1\;\;\;\text{and}\;\;\;
\int \frac{\mathbb{I}(K(x_1,\cdots , x_m)>\tilde{u})}{A_1(x_2,\ldots,x_m, \tilde{u})}dx_1 = 1
\]
\[
\int \frac{\mathbb{I}(K(\tilde x_1,\cdots, \tilde x_{j-1} , x_j,\cdots, x_m)>\tilde u)}{A_j(\tilde x_1, \cdots, \tilde x_{j-1}, x_{j+1}, \cdots, x_m, \tilde u)}dx_j = 1\;\;\text{for}\;j=2,\ldots,m.
\]
And finally by definition, the last term in the integrand is given by
\[
H(\tilde u , \tilde x_1, \cdots \tilde x_m) = \frac{\mathbb{I}(K(\tilde x_1,\cdots,\tilde x_m)>\tilde u)}{Z}
\]
So by using the above equations we can say that equation~(\ref{eq:eq}) is satisfied establishing the stationarity of the ASG sampler.
\end{proof}
\setcounter{theorem}{0}
\begin{remark}
It may be noted that ASG sampler (like any Gibbs sampler) does not necessarily produce a reversible Markov Chain. But by using a random scan version of ASG we can construct a reversible version of ASG. That is after the first step $\tilde u \sim Unif(0, K(x_1,\cdots, x_m))$ we can choose a random permutation of $\{ 1,2,\cdots, m\}$ of the coordinates and call it $\{j_1,j_2,\cdots,j_m\}$ and update $x_{j_1},x_{j_2}, \cdots, x_{j_m}$ as follows.
\[
\tilde x_{j_1} \sim \frac{\mathbb{I}(K(x_1,x_2, \cdots, \tilde x_{j_1}, x_{j_1+1}, \cdots x_m)>\tilde{u})}{A_1(x_1,x_2, \cdots, x_{j_1-1},x_{j_1+1}, \cdots, x_m,\tilde u)}
\]
and then $\tilde x_{j_2},\cdots, \tilde x_{j_m}$ depending on the positions of the coordinates $j_2, \cdots, j_m$.
For other options to convert the systematic scan ASG to random scan ASG sampler we can use techniques similar to those that appears in the Chapter 10 of~\cite{robert1999monte}-Algorithm A.41.
\end{remark}
Next, under some additional regularity conditions we establish the uniform geometric ergodicity of the ASG sampler. The primary step involves verifying the {\em Doeblin minorization condition.}. \\
Consider a random sample $(U, X_1,\ldots,X_m) \sim H(u, x_1,\ldots,x_m)$ given by (\ref{eq:stationary}). Then the full conditional density of $X_j$ given the $X_1=x_1,\ldots,X_{j-1}=x_{j-1}, X_{j+1}=x_{j+1},\ldots,X_m=x_m, U=u$ is given by 
\begin{equation}
 q_j(x_j \mid x_1,\ldots,x_{j-1},x_{j+1},\ldots,x_m,u) = \frac{\mathbb{I}(K(x_1,\cdots, x_{j-1} , x_j,\cdots, x_m)> u)}{A_j(x_1, \cdots, x_{j-1}, x_{j+1}, \cdots, x_m, u)}
 \label{eq:condl.density}
\end{equation}
Notice that for the SG sampler we will need this density to be positive for any given values of $x_1, \cdots, x_{j-1}, x_{j+1}, \cdots, x_m, u$ in the above conditional density. For the uniform ergodicity, we need a more stronger condition to ensure the positivity of the above conditional density.

\setcounter{theorem}{1}
\begin{theorem}[Uniform ergodicity of ASG sampler] 
Consider the SG sampler given in (\ref{eq:sgs}) based on a target kernel given in (\ref{eq: K(x)}) that satisfies the following additional conditions:
\begin{itemize}
\item[(A1)] The density is bounded above, i.e., $p(x)\leq M$ for some $M>0$ for any $x\in \mathcal{S}=\{x\in\mathbb{R}^m: K(x)>0\}$
\item[(A2)] For $j=1,2,\ldots,m$, assume that $q_j(x_j \mid x_1,\ldots,x_{j-1},x_{j+1},\ldots,x_m,u)\ge \delta_j$ for some some $\delta_j>0$ uniformly for any $u>0$ and $x=(x_1,\ldots,x_m)\in \mathcal{S}$.
\end{itemize}
Then the transition kernel $T(x\rightarrow \tilde x) = \int_0^\infty T((u,x) \rightarrow (\tilde{u}, \tilde{x}) d\tilde u$ satisfies the Doeblin condition
$$T(x\rightarrow \tilde x)\ge \epsilon p(\tilde x)$$
where $\epsilon = \prod_{j+1}^m\;\delta_j/M$. As a consequence, the ASG is uniformly ergodic [e.g., by Theorem 6.59 given in \cite{robert1999monte}].
\label{th:ergodicity}
\end{theorem}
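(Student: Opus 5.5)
The plan is to compute the marginal one-step kernel $T(x\to\tilde x)=\int_0^\infty T((u,x)\to(\tilde u,\tilde x))\,d\tilde u$ from the explicit form (\ref{eq:trans}) and bound it from below pointwise in $\tilde x$. Each factor in the product over $j$ is exactly a full conditional of the form (\ref{eq:condl.density}), evaluated at the mixed argument $(\tilde x_1,\dots,\tilde x_{j-1},\,\cdot\,,x_{j+1},\dots,x_m)$ with height $\tilde u$. So (A2) lets us replace each factor by $\delta_j$, provided the factor is evaluated at a point where the conditioning configuration lies in $\mathcal S$ and the indicator in the numerator is active. This gives
\[
T(x\to\tilde x)\;\ge\;\Big(\prod_{j=1}^m\delta_j\Big)\int_0^\infty \frac{\mathbb{I}\{K(x)>\tilde u\}}{K(x)}\,\mathbb{I}\{\tilde u\ \text{admissible for all intermediate points}\}\,d\tilde u .
\]

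The next step is to handle the remaining $\tilde u$-integral and produce the factor $p(\tilde x)/M$. The natural region is $\tilde u<\min\{K(x),K(\tilde x)\}$, which in particular makes the last indicator $\mathbb{I}\{K(\tilde x)>\tilde u\}$ active. On that region the integrand is $1/K(x)$, so the integral is $\min\{K(x),K(\tilde x)\}/K(x)=\min\{1,K(\tilde x)/K(x)\}$.

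Bounding $K(x)\le ZM$ by (A1) and writing $K(\tilde x)=Zp(\tilde x)$, I would aim for
\[
\min\{1,\,p(\tilde x)/p(x)\}\ \ge\ \min\{1,\,p(\tilde x)/M\}\ \ge\ \min\{1,1/M\}\,p(\tilde x)\cdot\min\{1,M\}.
\]
With the normalization implicit in the statement (effectively $M\ge 1$, since $p\le M$ forces $M\cdot\mathrm{vol}(\mathcal S)\ge1$, and absorbing constants), this gives $T(x\to\tilde x)\ge(\prod_j\delta_j/M)\,p(\tilde x)$, which is the Doeblin condition. Uniform ergodicity then follows directly from Theorem 6.59 of \cite{robert1999monte}. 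Stationarity (Theorem~\ref{th:stationary}) provides $p$ as the invariant law, so the minorization is with respect to the correct measure.

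The main obstacle is justifying the use of (A2) for the intermediate factors. The numerator indicators $\mathbb{I}\{K(\tilde x_1,\dots,\tilde x_j,x_{j+1},\dots,x_m)>\tilde u\}$ involve mixed points, and I need them all to be active simultaneously for $\tilde u$ in a set of measure comparable to $K(\tilde x)$. I would read (A2) as asserting that the conditional density is at least $\delta_j$ everywhere on its argument's domain, i.e. including the indicator, so that each factor is $\ge\delta_j$ for every $\tilde u<K(x)$. This is the strong positivity the paper flags before the theorem. It is essentially a boundedness-of-support assumption, with $A_j\le1/\delta_j$ and the slice filling it. If that reading fails, the fallback is to restrict $\tilde u$ to $\tilde u<\min_j K(\text{mixed point}_j)$ and accept a weaker constant.
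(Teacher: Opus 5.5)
Your argument is essentially the paper's: write the one-step density as $1/K(x)$ times a product of full conditionals $q_j$ and an indicator, bound each $q_j$ below by $\delta_j$ via (A2), integrate out $\tilde u$, and finish with (A1). The paper groups the factors differently. It evaluates them at the old coordinates, as $q_j(x_j\mid \tilde x_1,\dots,\tilde x_{j-1},x_{j+1},\dots,x_m,\tilde u)$, so $\mathbb{I}\{K(x)\ge\tilde u\}$ sits inside $q_1$. It then bounds these by $\delta_j$ for every $\tilde u$ and integrates the leftover $\mathbb{I}\{K(\tilde x)>\tilde u\}$ over $(0,K(\tilde x))$ to get $p(\tilde x)/p(x)$, which silently drops the constraint $\tilde u<K(x)$. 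Your restriction to $\tilde u<\min\{K(x),K(\tilde x)\}$, giving $\min\{1,p(\tilde x)/p(x)\}$, is the more careful computation and reaches the same constant.

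Two corrections. First, your last step is wrong as written. The bound $p\le M$ gives $M\,\mathrm{vol}(\mathcal{S})\ge 1$, not $M\ge 1$; for example, the uniform density on $[0,10]$ has $M=1/10$. For $M<1$ your chain only yields the constant $M\prod_j\delta_j$. No normalization is needed, because (A1) also holds at $\tilde x$: $p(\tilde x)/M\le 1$, hence $\min\{1,p(\tilde x)/M\}=p(\tilde x)/M$, and you get exactly $\epsilon=\prod_j\delta_j/M$.

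Second, on the mixed-point indicators, you are not missing an idea the paper supplies. Its proof replaces every factor by $\delta_j$ without comment, i.e.\ it makes your strong reading of (A2) implicitly. Your fallback, however, cannot rescue a weaker reading. Suppose (A2) only means $q_j\ge\delta_j$ wherever $q_j>0$ (i.e.\ $A_j\le 1/\delta_j$, bounded slices). Take $K$ to be the indicator of the unit disk; then (A1) holds and this version of (A2) holds with $\delta_j=1/2$. Yet from $x=(0,0.99)$ the first coordinate update forces $|\tilde x_1|<\sqrt{1-0.99^2}\approx 0.14$, so $T(x\to\tilde x)=0$ at $\tilde x=(0.99,0)$, although $p(\tilde x)>0$. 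Whenever a mixed point leaves $\mathcal{S}$, the admissible $\tilde u$-set is empty and no constant works. So the one-step Doeblin bound genuinely requires the strong reading. That reading is much more than a bounded-support assumption: it forces at least $\mathcal{S}$ to be essentially a product set.
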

\begin{proof}
From the proof of Theorem \ref{th:stationary}, it follows that the transition kernel of the ASG sampler can be expressed as
\begin{align*}
T((u,x) \rightarrow (\tilde{u}, \tilde{x}) =\quad\quad\quad\quad\quad\quad\quad\quad\quad\quad\quad\quad\quad\quad\quad\quad\quad\quad\quad\quad\quad\quad\quad\quad\quad\quad\\
\frac{1}{K(x_1,\ldots,x_m)}\times 
\prod_{j=1}^{m}
\frac{\mathbb{I}\{K(\tilde{x}_1,\dots,\tilde{x}_{j-1}, x_{j}\dots,{x}_m)\ge \tilde{u}\}}
{A_j(\tilde{x}_1,\dots,\tilde{x}_{j-1}, x_{j+1}\dots,{x}_m, \tilde u)}\times  \mathbb{I}(K(\tilde x_1, \cdots, \tilde x_m)> \tilde u) \\
= \frac{1}{K(x)}\times 
\prod_{j=1}^{m} q_j(x_j| \tilde{x}_1,\dots,\tilde{x}_{j-1}, x_{j+1}, \cdots, x_m, \tilde{u})
\times \mathbb{I}(K(\tilde x)> \tilde u)
\end{align*}
Thus, by using assumptions (A1) and (A2) along with the defintion of the transition kernel $T(x\rightarrow\tilde x)$, it follows that
\[
T(x\rightarrow \tilde x)\ge \frac{1}{K(x)}\prod_{j=1}^m\delta_j\int_0^{K(\tilde x)} d\tilde u = \left(\prod_j\delta_j\right) \frac{p(\tilde x)}{p(x)} 
\ge \epsilon\;p(\tilde x).
\]
\end{proof}
\setcounter{theorem}{1}
\begin{remark}
Notice that in above, the condition (A1) could be replaced by $K(x)\leq M$ and in that case $M$ has to be replaced by $MZ$ in the definition of $\epsilon$. Admittedly the condition (A2) in the above theorem is a bit stronger than it may be necessary. The uniform erdocity which is equivalent to verifying Doeblin condition can possibly be established under a weaker condition. For example, examining the above proof, the $\delta_j$ in (A2) could be allowed to depend on $\tilde{x}_1,\dots,\tilde{x}_{j-1}$ and $\tilde u$ for a weaker condition. Such theoretical investigations remain a part of future explorations.
\end{remark}
\noindent This stands in sharp contrast to gradient--based methods such as HMC or to random--walk and adaptive RW-MH algorithms, for which geometric ergodicity may fail or can be difficult to verify, and for which convergence cannot be assessed from any low--dimensional functional of the chain alone. Consequently, slice--based algorithms offer a uniquely transparent and theoretically grounded mechanism for both \emph{monitoring} and \emph{guaranteeing} convergence, representing a major methodological advantage in high--dimensional or irregular settings where traditional MCMC diagnostics can be opaque or unreliable.
\subsubsection*{Convergence Diagnostics of Slice Samplers:}
An additional and often underappreciated advantage of slice--based MCMC methods is the simplicity with which convergence diagnostics may be conducted. Because each iteration of a slice sampler operates by sampling uniformly from the region $\{x : K(x) \ge u\}$ for an auxiliary slice height $u$, \emph{stationarity of the $m$--dimensional Markov chain is equivalent to stationarity of the log--kernel values} $\log K(x^{(t)})$ for $t=1,2,\ldots$ observed along the chain. Thus, instead of monitoring potentially $m$-dimensional trajectories of $\{x^{(t)}; t=1,2,\ldots\}$, one may diagnose convergence by examining the univariate series $\{\log K(x^{(t)}); t=1,2\ldots\}$, whose marginal distribution at stationarity is fully determined by the underlying target density $f(x)\propto K(x)$. This equivalence arises because the slice sampler constructs a uniformly ergodic Markov chain on the joint $\{(x^{)t)}, u^{(t)}); t=1,2,\ldots\}$ space whose marginal $\{x^{(t)};\;t=1,2,\ldots\}$ has the correct invariant distribution, and the uniformity on slices ensures that the evolution of the ordinate variable faithfully reflects the chain’s approach to equilibrium. 
Theoretical results by \cite{roberts2001optimal,robertsrosenthal1999} show that, under mild regularity conditions considerably weaker than those typically required for Metropolis-Hastings or Hamiltonian Monte Carlo slice samplers are \emph{geometrically ergodic}, thereby providing a rigorous foundation for such diagnostics (see Theorem~6 of \cite{miratierney2002}. In the following theorem, we show that the proposed ASG sampler is  uniformly ergodic. 

\section{MCMC Benchmark for Comparing MCMC methods}\label{sec:benchmark}
When comparing MCMC algorithms (e.g., different samplers or different parameterizations), it is important to assess both \emph{convergence} and \emph{efficiency}.  Below we summarize widely used diagnostics, provide formulas to compute them, give practical interpretation guidance, and list references for deeper reading.
\subsection{Effective Sample Size (ESS)}
Effective sample size (ESS) quantifies the variance inflation induced by serial dependence in Markov chain Monte Carlo output. 
Let $\{X^{(t)}\}_{t=1}^T$ be a stationary, geometrically ergodic Markov chain targeting $\pi$, and let $h(X)$ be a square-integrable function of interest. 
Denote by $\rho_k$ the lag-$k$ autocorrelation of $h(X^{(t)})$. 
The integrated autocorrelation time (IACT) is defined as
$\tau = 1 + 2 \sum_{k=1}^{\infty} \rho_k,$
provided the series converges. 
Under standard regularity conditions ensuring a central limit theorem for Markov chains,
\[
\sqrt{T}\,(\bar{h}_T - \mathbb{E}_\pi[h]) \;\xrightarrow{d}\; 
\mathcal{N}(0, \sigma_h^2 \tau),
\]
where $\sigma_h^2 = \mathrm{Var}_\pi(h(X))$. 
Consequently, the Monte Carlo variance of the sample mean satisfies $\mathrm{Var}(\bar{h}_T) \approx \frac{\sigma_h^2 \tau}{T}.$

The effective sample size is therefore defined as  $\boldsymbol{\mathrm{ESS} = \frac{T}{\tau}},$ which can be interpreted as the number of independent draws producing equivalent estimator variance. In practice, $\tau$ is estimated via truncated autocorrelation sums, commonly using Geyer’s initial monotone sequence estimator or spectral density estimators at frequency zero.  
Geometric ergodicity guarantees exponential decay of autocorrelations and hence finite $\tau$, ensuring $\mathrm{ESS} > 0$ and $\mathrm{ESS} \to \infty$ as $T \to \infty$.

\subsection{Other Samplers Compared with ASG sampler}

\begin{itemize}
    \item Random Walk Metropolis–Hastings (RWMH): We used \texttt{metrop} function from the \texttt{mcmc} R package~\cite{geyer1992practical,Geyer_2011}, to do the RW-MH sampling.
    \item Adaptive Gibbs sampling: We used \texttt{adaptMetropGibbs} function from the \texttt{spBayes} R package~\cite{Finley_2015, Finley_2020}, to do the Adaptive Gibbs sampling.
    \item Hamiltonian Monte Carlo (HMC): We used \texttt{hmc} function from the \texttt{rhmc} R package~\cite{betancourt2017conceptual}, to do the HMC sampling.
    \item Quantile Slice Sampling: We used \texttt{slice\textunderscore quantile} function from the \texttt{qslice} R package~\cite{neal2003slice, Heiner_2024, heiner2024quantile}, to do the Qslice sampling.
    \item Elliptical Slice sampler: We used \texttt{slice\textunderscore elliptical} function from the \texttt{qslice} R package~\cite{Heiner_2024, heiner2024quantile}, to do the Elliptical Slice sampling.
\end{itemize}

\section{Numerical Illustrations Based on Non-standard Probability Kernels}\label{sec:examples}
The proposed algorithm was executed on simulation parameters were set as samples = 1000, burn = 250, thin = 1. The primary computational modules of the algorithm are summarized below, first the \texttt{effective support} function utilizes numerical integration and a root-finding routine (\texttt{uniroot}) to estimate the effective support interval [a,b] of the target kernel.
Then the  \texttt{Sliced fixed u}  function implements a rejection-based slice sampling procedure embedded within the Gibbs sampling framework.
After that the  \texttt{ASG} function performs coordinate-wise updates across all dimensions, generating posterior draws whose diagnostics are visualized using \texttt{ggplot2}.
Across all experimental settings described below, the diagnostic visualizations consistently demonstrate rapid convergence and stable chain behavior. The proposed sampler exhibits substantially superior mixing properties compared to other counterpart. Moreover, the gain in Effective Sample Size (ESS) remains consistent across both bounded and unbounded target distributions, underscoring the robustness and computational efficiency of the method. 
\subsection{Univariate Kernel}\label{univariate}
The univariate target density considered in this study is constructed as a mixture of three Beta distributions. Formally, the kernel is defined as
\begin{equation}\label{eq:k1}
\begin{split}
K(x) = 0.3 \cdot \frac{1}{2} \text{Beta}\left(\frac{x + 5}{2}; 0.5, 1\right) + 0.4 \cdot \frac{1}{2} \text{Beta}\left(\frac{x + 1}{2}; 2, 2\right) +\\ \quad \quad \quad  0.3 \cdot \frac{1}{2} \text{Beta}\left(\frac{x - 5}{2}; 1, 0.5\right)
\end{split}
\end{equation}
where each component Beta density is appropriately shifted and scaled to ensure non-overlapping yet moderately interacting modes. This mixture configuration facilitates the evaluation of the sampler’s ability to efficiently explore multimodal probability landscapes.
\subsubsection{Early-Stage Exploration Behavior of the Effective support of the proposed ASG Sampler}
\begin{figure}[!ht]
    \centering
    \includegraphics[width=\linewidth]{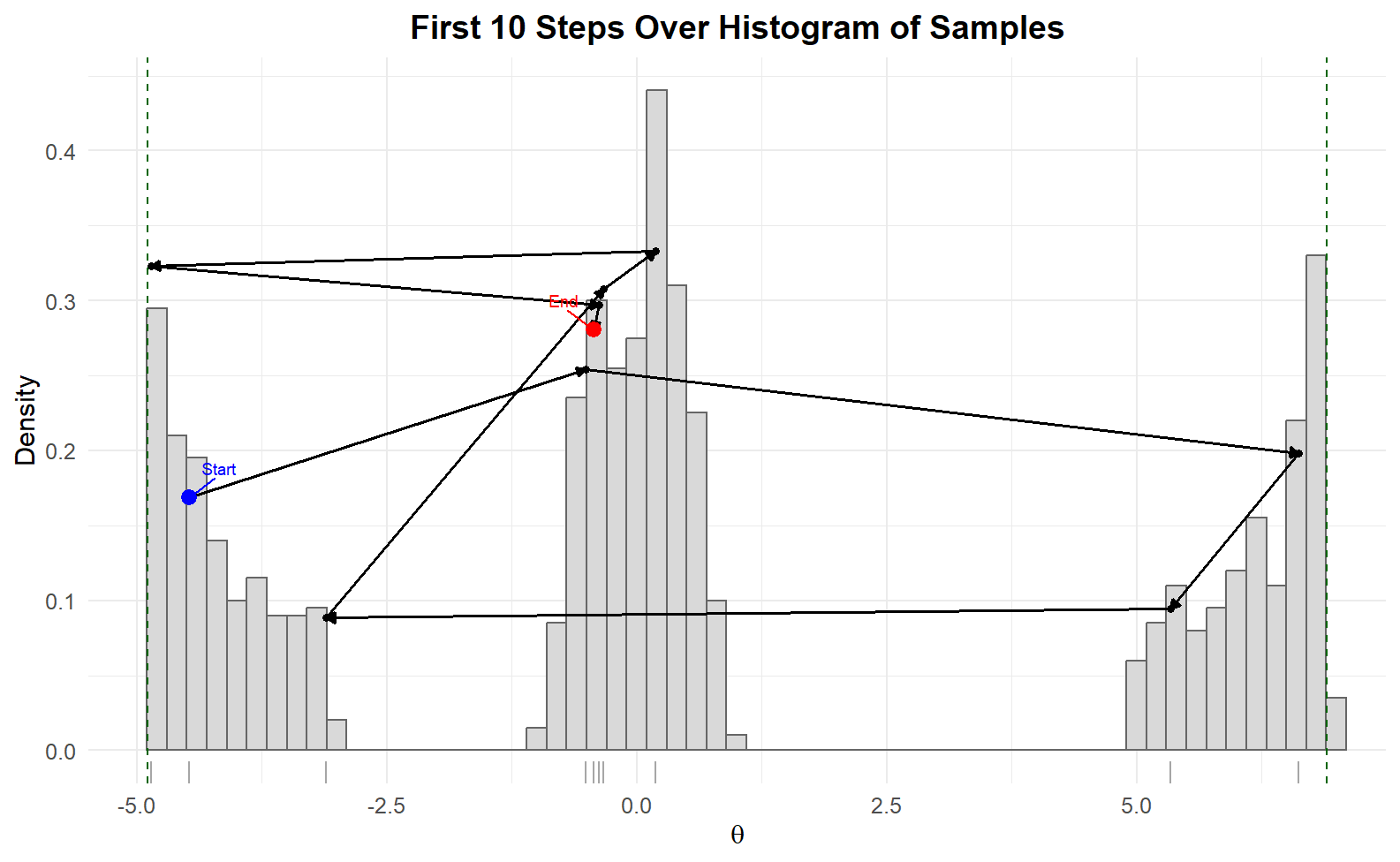}
    \caption{First 10 steps of the ASG algorithm based on the effective support over the beta kernel~\ref{eq:k1}}
    \label{fig:first_20}
\end{figure}
Figure~\ref{fig:first_20} illustrates the behavior of the effective support of our proposed ASG sampling algorithm during its initial ten iterations when applied to the mixture of Beta kernels defined in Equation~\ref{eq:k1}. The gray histogram in the background represents the empirical distribution of all post--burn-in samples, which effectively approximates the target normalized density. The histogram is superimposed on a black connected trajectory depicting the first ten accepted samples of the proposed sampler. The initial sample (black dot) denotes the starting point of the chain, while the red dot marks the tenth accepted sample. Notably, the early samples exhibit pronounced jumps across the three major modes of the mixture, approximately spanning the range $[-7, 7]$, before the chain begins to stabilize within the dominant high-density regions. The dashed green vertical lines indicate the estimated effective-support bounds inferred from the kernel. This figure demonstrates the sampler's capacity for rapid global exploration in the initial phase, efficiently identifying and revisiting the principal probability mass regions of a multimodal target distribution. The visualization thereby highlights both the flexibility and stability of the proposed ASG approach during the burn-in period.

\subsubsection{Numerical Evaluation Metrics of the proposed ASG sampler}
The Figure~\ref{proposed_density} illustrates the density sampled overlaid on the true kernel, demonstrating the sampler's ability to capture the target distribution. Table~\ref{tab:comparison_uni} compares the performance of the Proposed ASG sampler and other sampling methods.
\noindent Table~\ref{tab:comparison_uni} presents a comparative analysis between the proposed Automated Sliced Gibbs (ASG) sampler and the other sampler across four key performance metrics: computational time (in seconds), effective sample size (ESS)n. The ASG sampler demonstrates a markedly higher ESS, reflecting substantially improved chain mixing and a more efficient exploration of the target distribution. Despite this improvement in sampling efficiency, the computational time remains comparable between these methods, indicating that the adaptive support estimation in ASG does not impose a significant computational overhead. The diagnostic plots (in the Appendix)—specifically the trace plot (Figure~\ref{trace_uni}), and autocorrelation function (ACF) plot (Figure~\ref{acf_uni})—corroborate these findings. These visual diagnostics exhibit rapid convergence, low serial autocorrelation, and stable running averages for the ASG sampler, collectively providing strong empirical evidence of its superiority over traditional MCMC methods in terms of both mixing efficiency and reliability of inference.
\begin{table}[!ht]
    \centering
    
    \begin{tabular}{l|rr}
    \textbf{Sampler} & \textbf{Time Taken}(in sec) & \textbf{ESS} \\
    \hline
    \hline
      \textbf{Proposed AGS}   & 0.208  & 1000   \\
      \textbf{RW-MH}   & 0.14  & 8 \\
      \textbf{HMC} & 0.14 & 415 \\
      \textbf{Adabtiive Gibbs} & 0.006 & 207 \\
      \textbf{Qslice} & 0.037  & 189 \\
          \textbf{Elliptical slice} &  0.044 & 163   \\
    \end{tabular}
\caption{Comparison of ASG and other algorithms for Univariate Kernel based on $1000$ MCMC samples post burn-in}
    \label{tab:comparison_uni}
\end{table}
\subsection{Bivariate Kernels}
The goal of implementing the proposed ASG sampler on highly complex bivariate kernels with sharp curvatures is to rigorously assess its ability to adapt to extreme local geometry while maintaining global exploration and statistical efficiency. Such targets present well-known difficulties for conventional MCMC methods due to strong anisotropy, narrow ridges, abrupt changes in curvature, and potentially disconnected high-density regions, which often lead to slow mixing, high autocorrelation, or failure of gradient-based methods. By applying ASG to this challenging class of distributions, we aim to demonstrate that its automated support calibration and slice-driven Gibbs updates can simultaneously achieve robust mode traversal, stable local adaptation, and reliable convergence without problem-specific tuning. The empirical performance of ASG will be systematically compared against widely used baselines—including Random Walk Metropolis–Hastings, Hamiltonian Monte Carlo, Adaptive Gibbs, and Qslice—using effective sample size per second as the primary criterion, alongside additional diagnostics such as autocorrelation decay, acceptance behavior, computational cost, and sensitivity to hyperparameters.
\subsubsection{Rosenbrock kernel (Banana Shaped Density)}\label{sec:rosenbrock}
The bivariate Rosenbrock distribution, often used as a benchmark for evaluating sampling algorithms, is characterized by its highly nonlinear and curved “banana-shaped” contours that pose a significant challenge for standard Markov chain Monte Carlo (MCMC) methods. The distribution is defined through its unnormalized density function as
\begin{equation}\label{eq:rosenbrock kernel}
K(x_1, x_2) = \exp\left( -\frac{x_1^2}{10} - \frac{x_2^2}{10} - 2 (x_2 - x_1^2)^2 \right)
\end{equation}
where $x_1,~x_2 \in \mathbb{R}$. The strong nonlinear dependency between the two parameters creates narrow curved ridges in the probability surface, making it an ideal test case to assess mixing efficiency, convergence behavior, and robustness of advanced sampling strategies. Table~\ref{tab:comparison_rose} presents the performance metrics.
\begin{table}[!ht]
    \centering
    
    \begin{tabular}{l|rr}
    \textbf{Sampler} & \textbf{Time Taken}(in Sec) & \textbf{ESS}   \\
    \hline
    \hline
      \textbf{Proposed ASG  (Dim 1):}   &0.104 & 1000   \\
            \textbf{Proposed ASG  (Dim 2):}   &0.104 & 730    \\
      \hline
      \textbf{RW-MH (Dim 1):}   & 0.160 & 17  \\
      \textbf{RW-MH (Dim 2):}   & 0.160 & 25 \\
      \hline
                \textbf{HMC (Dim 1):} & 0.074 & 25\\
        \textbf{HMC (Dim 2):} & 0.074 & 32 \\
          \hline
          \textbf{Adaptive Gibbs (Dim 1):} & 0.012 & 50 \\
        \textbf{Adaptive Gibbs (Dim 2):} & 0.012 & 40  \\
 %       \textbf{Adabtiive Gibbs Sampler (Dim 3):} & & &\\
        \hline
      \textbf{Qslice  (Dim 1):} & 0.087 & 379 \\
    \textbf{Qslice  (Dim 2):} & 0.087 & 129 \\
            \hline
      \textbf{Elliptical slice  (Dim 1):} & 0.042  & 453  \\
    \textbf{Elliptical slice  (Dim 2):} & 0.042 & 432  \\
%    \textbf{Qslice sampler (Dim 3):} & & &\\ 453.2690 432.7399
    \end{tabular}
\caption{Comparison of ASG and other algorithms for Rosenbrock Kernel based on $1000$ MCMC samples post burn-in}
    \label{tab:comparison_rose}
\end{table}
\begin{figure}[!ht]
    \centering
         \begin{subfigure}[b]{0.49\linewidth}
    \includegraphics[width=\linewidth]{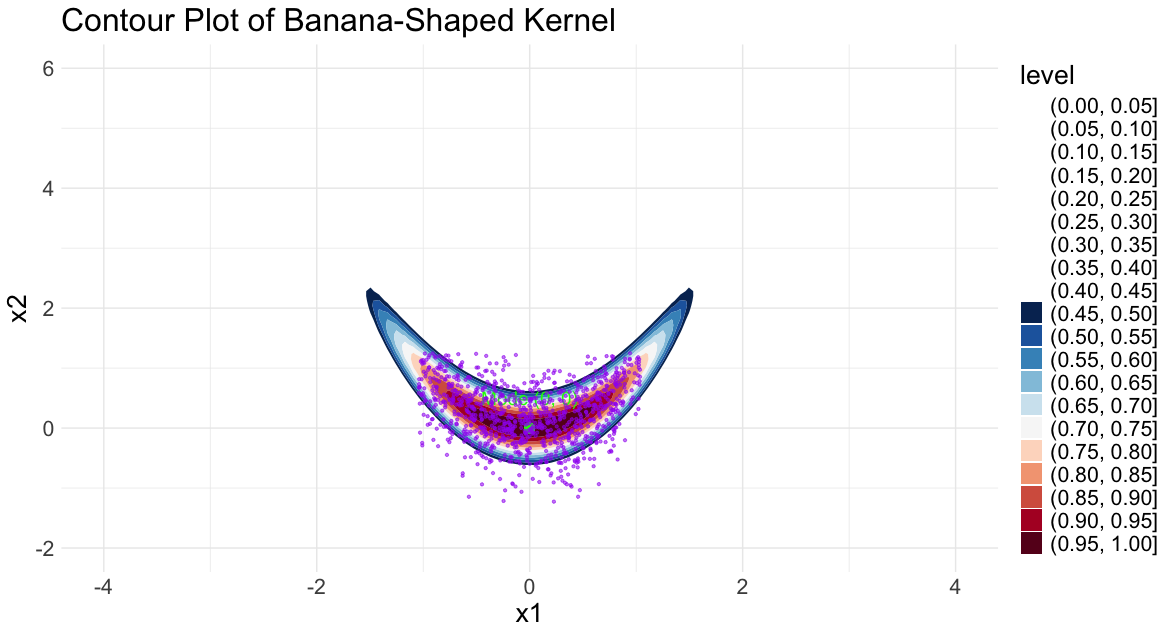}
    \caption{Contour Plot of Rosenbrock Kernel where the red dots are the samples generated by ASG}
    \label{fig:b_s_ker}
\end{subfigure}
\hfill
     \begin{subfigure}[b]{0.49\linewidth}
    \centering
    \includegraphics[width=\linewidth]{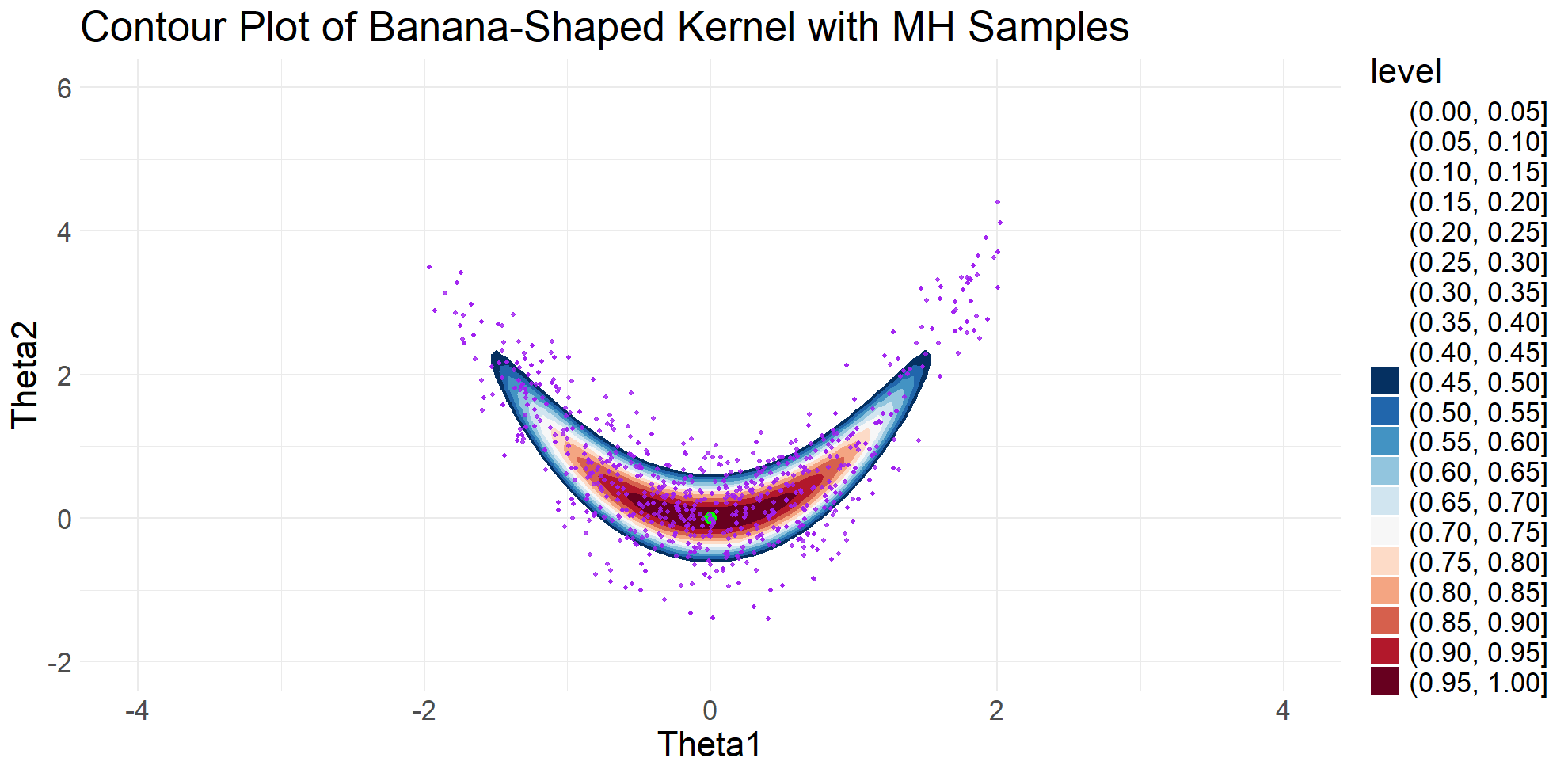}
    \caption{Contour Plot of Rosenbrock Kernel where the red dots are the samples generated by RW-MH}
    \label{fig:b_s_ker_mh}
    \end{subfigure}
    \caption{Samples overlayed on contour Plot of Rosenbrock Kernel}
\end{figure}
The proposed ASG sampler exhibits markedly superior sampling performance compared to the other MCMC algorithm. The ASG sampler attains substantially higher effective sample sizes (ESS) of 1000 and 730.50, in contrast to the other sampler’s considerably lower ESS values, indicating enhanced mixing efficiency and a greater proportion of effectively independent samples. Moreover, the posterior mode estimates reveal that the ASG sampler more accurately captures the underlying structure of the target distribution, particularly its non-linear “banana-shaped” geometry. Despite this improvement in sample quality, the ASG sampler achieves these results with slightly lower computational cost, highlighting its computational efficiency.
Visual diagnostics further corroborate these findings. The trace plots (Figure~\ref{trace_all_rose}), and autocorrelation function (ACF) plots (Figure~\ref{acf_all_rose}) consistently demonstrate that the proposed ASG sampler converges more rapidly and yields lower autocorrelation compared to the RW-MH sampler. Collectively, these results confirm the superior accuracy, efficiency, and robustness of the ASG sampler in exploring complex, curved probability landscapes.
Our proposed method demonstrates that the \texttt{effective.support()} algorithm dynamically identifies accurate bounds for complex multi-dimensional densities using the banana kernel as an example. By fixing one dimension and varying the other (through mode, maximized, and random slices), the algorithm rapidly adjusts the support intervals to capture nearly all of the target mass, even when the density is curved or asymmetric. For example the resulting effective-support bounds, shown as colored lines, closely match the true marginal densities of $x_1$ and $x_2$, with wider intervals for skewed regions (as in $x_2$) and tighter, symmetric intervals where the distribution is stable (as in $x_1$). This confirms that the method efficiently adapts to local geometric variations, providing an accurate and automated identification of the kernel’s effective range.
{In this experiment, the two-dimensional Rosenbrock (banana-shaped) kernel was analyzed to illustrate how the proposed effective-support algorithm adapts its bounds when one variable is fixed. The joint kernel exhibits a curved, non-Gaussian structure centered near $(x_1, x_2) = (0, 0)$, corresponding to the global mode. To investigate conditional behavior, both variables were fixed at selected slice values $c \in \{-2, 0, 1\}$, representing the global mode and two random locations. For the conditional analysis of $X_2 \mid X_1 = c$, the vertical blue segments in Figure~\ref{fig:x1} denote the effective ranges of $x_2$ corresponding to each fixed $x_1$ value. The estimated bounds are summarized in Table~\ref{tab:x2_given_x1}, showing that as $x_1$ increases from $-2$ to $1$, the support of $x_2$ contracts and shifts upward (from $[-0.0002,\, 5.0667]$ to $[-0.3045,\, 2.2093]$). This indicates a clear curvature in the density, consistent with the banana-shaped structure. Similarly, for $X_1 \mid X_2 = c$, the horizontal red segments depict the effective bounds of $x_1$ for each fixed $x_2$. Table~\ref{tab:x1_given_x2} reports these ranges, showing near-symmetric intervals about zero, e.g., $[-1.0415,\, 1.0415]$ for $x_2 = 0$. This symmetry verifies that the $x_1$ distribution remains centered and less skewed compared to $x_2$.
}
\begin{table}[!ht]
\centering
\begin{minipage}{0.4\textwidth}
\centering
\footnotesize
\begin{tabular}{l||ccc}
\hline
Label & $x_1$ & Lower Bound($y_0$) & Upper Bound($y_1$) \\
\hline
lower   &  0  & -1.256 &  1.256 \\
lower1  & -2  & -0.002 &  5.066 \\
lower2  &  1  & -0.304 &  2.209 \\
\hline
\end{tabular}
\caption{ Vertical bounds: $X_2 \mid X_1 = c$}
\label{tab:x2_given_x1}
\end{minipage}
\hfill
\begin{minipage}{0.43\textwidth}
\centering
\footnotesize
\begin{tabular}{lccc}
\hline
$x_2$ & Lower Bound($x_0$) & Upper Bound($x_1$) \\
\hline
  0  & -1.041 &  1.041 \\
 -2  & -0.602 &  0.602 \\
  1  & -1.426 &  1.426 \\
\hline
\end{tabular}
\caption{ Horizontal bounds: $X_1 \mid X_2 = c$}
\label{tab:x1_given_x2}
\end{minipage}
\end{table}
\begin{figure}[!ht]
    \centering
        \includegraphics[width=
        \linewidth]{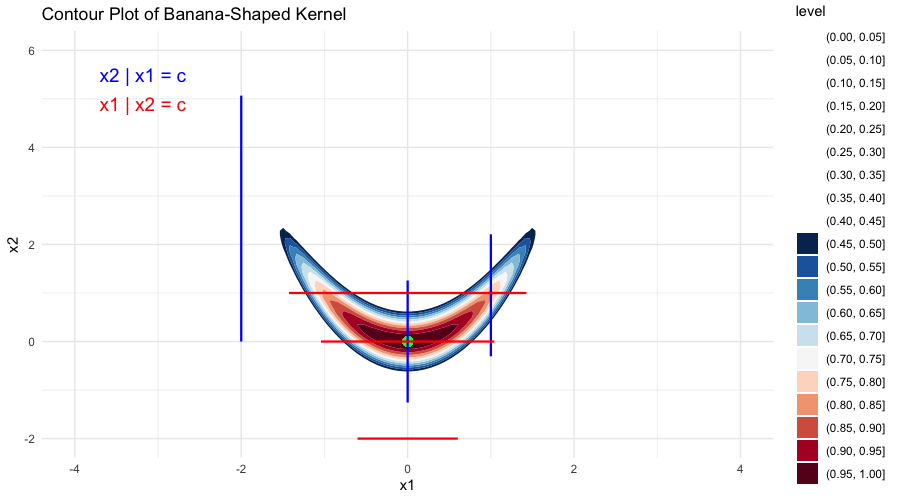}
        \caption{\textbf{Contour plot of the Rosenbrock kernel with effective-support bounds.} 
    The vertical blue lines represent the estimated conditional ranges of $X_2 \mid X_1 = c$, while the horizontal red lines show the bounds of $X_1 \mid X_2 = c$. 
    The contours illustrate the banana-shaped density curvature, and the bounds dynamically adjust to cover the effective support depending on the fixed variable.}
        \label{fig:x1}
\end{figure}
\subsubsection{Ackley Function}
To evaluate the performance of the proposed sampling algorithm, we employ the two-dimensional Ackley function, a well-known non-convex benchmark characterized by a large number of local minima and a single global minimum at the origin. This function serves as an effective testbed for assessing the convergence behavior and robustness of sampling schemes in complex, multimodal landscapes.
Formally, the m-dimensional Ackley function for a parameter vector $x = (x_1, x_2, \cdots, x_m)$
\[
f(x) = -a \exp\left(-b \sqrt{\frac{1}{m} \sum_{i=1}^m x_i^2}\right) - \exp\left(\frac{1}{m} \sum_{i=1}^m \cos(c x_i)\right) + a + e,
\]
where \( a = 20 \), \( b = 0.2 \), and \( c = 2\pi \). For the purpose of this study, we consider the unnormalized density, which induces a challenging energy landscape for Markov Chain Monte Carlo exploration. \textbf{Ackley Kernel} The unnormalized density (kernel) is:
\[
K(x) = \begin{cases} 
\exp\left(-\frac{f(x)}{\tau}\right) & \text{if } x_i \in [-L, L] \text{ for all } i = 1, \dots, m, \\
0 & \text{otherwise}.
\end{cases}
\]
\begin{table}[!ht]
    \centering
    
    \begin{tabular}{l|rr}
    \textbf{Sampler} & \textbf{Time Taken}(in Sec) & \textbf{ESS}  \\
    \hline
    \hline
      \textbf{Proposed ASG  (Dim 1):}   & 1.99 & 1119   \\
            \textbf{Proposed ASG (Dim 2):}   & 1.99 & 1000   \\
      \hline
      \textbf{RW-MH (Dim 1):}   & 0.17 & 143  \\
      \textbf{RW-MH (Dim 2):}   & 0.17 &  164   \\
      \hline
                \textbf{HMC (Dim 1):} & 0.6 & 115 \\
        \textbf{HMC (Dim 2):} & 0.6 & 83 \\
        \hline
          \textbf{Adaptive Gibbs (Dim 1):} &0.023 & 52 \\
        \textbf{Adaptive Gibbs (Dim 2):} &0.023 & 109 \\
   %     \textbf{Adabtiive Gibbs Sampler (Dim 3):} & & &\\
        \hline
      \textbf{Qslice sampler (Dim 1):} & 0.196 & 1935 \\
    \textbf{Qslice sampler (Dim 2):} & 0.196 & 1000 \\
                \hline
      \textbf{Elliptical slice  (Dim 1):} & 0.095 & 684  \\
    \textbf{Elliptical slice  (Dim 2):} & 0.095 &  757 \\
  %  \textbf{Qslice sampler (Dim 3):} & & &\\684.4685 757.7440
    \end{tabular}
\caption{Comparison of ASG and other algorithms for Ackley Function based on 1000 MCMC samples post burn-in}
    \label{tab:comparison_ackley}
\end{table}
\begin{figure}[!ht]
\centering
\begin{subfigure}[b]{0.49\linewidth}
    \centering
    \includegraphics[width=1.13\linewidth]{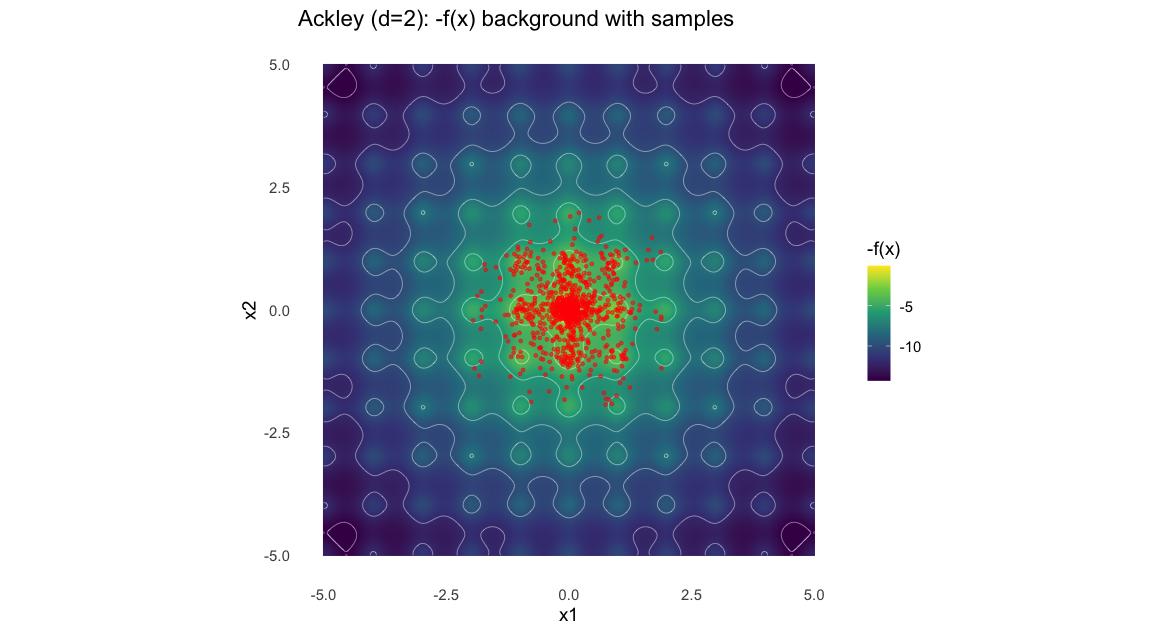}
    \caption{Achley Function plot after ASG}
    \label{fig:ach_f}
\end{subfigure}
\hfill
\begin{subfigure}[b]{0.49\linewidth}
    \centering
    \includegraphics[width=1\linewidth]{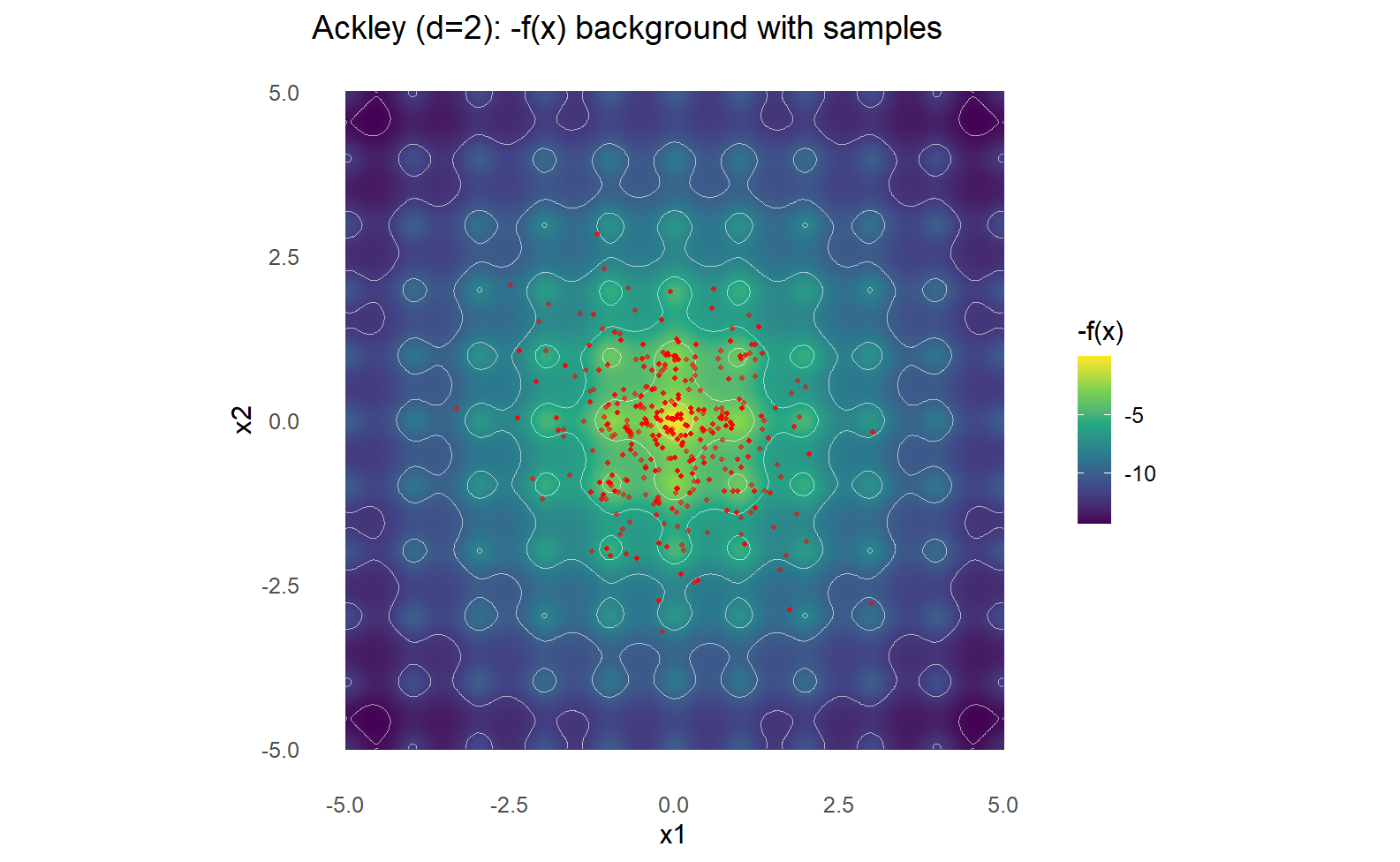}
    \caption{Ackley Function plot after our RW-MH}
    \label{fig:ach_f_mh}
\end{subfigure}
\caption{Contour Plot of Ackley Function where for both figures the red dots are the samples generated by ASG((a)) and RW-MH((b)) overlayed on the contour plot.}
\label{ach_all}
\end{figure}
\noindent For the challenging multimodal Ackley kernel, the proposed Automated Sliced Gibbs (ASG) sampler once again exhibits substantially superior sampling performance compared to the other MCMC algorithm. As reported in Table~\ref{tab:comparison_ackley}, the ASG sampler achieves markedly higher effective sample size (ESS) values of 1119 and 1000, whereas the other samplers are giving lower ESS. Despite this significant improvement in sampling efficiency, the computational time of the ASG sampler (1.99 seconds), demonstrating that the enhancement in sample quality does not come at a prohibitive computational cost. This result underscores the robustness and scalability of the proposed method when applied to complex, non-Gaussian, and multimodal target densities.
Furthermore, diagnostic visualizations support these quantitative findings. The trace plots (Figure~\ref{trace_all_ack}), and autocorrelation function (ACF) plots (Figure~\ref{acf_all_ack}) all reveal markedly improved mixing behavior and faster convergence of the ASG sampler relative to the other sampler. Collectively, these results confirm that the ASG approach maintains both efficiency and stability, even in high-dimensional and rugged posterior landscapes.
\subsection{Multivariate Kernels in $\mathbb{R}^m$}
Consider the $m$-dimensional density (from \cite{robert1999monte} Example 8.9, p-332)
\[
K(x) \propto \exp\{-\|x\|\}, \quad x \in \mathbb{R}^m,
\]
which depends only on the Euclidean norm $\|x\|$.  
Let $z = \|x\|$ denote the radial component.  
Then, the induced marginal density of $z$ is given by $\eta_m(z) \propto z^{m-1} e^{-z}, \quad z > 0,$
that is, a Gamma distribution with shape parameter $m$ and rate parameter $1$.  
By defining $u = z^m$, the corresponding transformed density becomes $\pi_m(u) \propto e^{-u^{1/m}}, \quad u > 0.$
The results for the 10-dimensional kernel are presented below in table~\ref{tab:ess10d}.
\begin{table}[!ht]
\centering
\begin{tabular}{lcccccccccc}
\toprule
\textbf{Dimension} & 1 & 2 & 3 & 4 & 5 & 6 & 7 & 8 & 9 & 10 \\
\midrule
\textbf{ESS} & 1000 & 1111 & 1000 & 1000 & 1145 & 891 & 1000 & 1000 & 1000 & 1028 \\
\bottomrule
\end{tabular}
\caption{Effective Sample Sizes by dimensions for a 10-Dimensional kernel $K(x) \propto \exp\{-\|x\|\}$ based on generating 1000 samples (after 200 burn-in).}
\label{tab:ess10d}
\end{table}
It has been noted that as the dimension $m$ increases, the standard slice sampler associated with $\pi_m$ experiences a severe degradation in performance.  For smaller dimensions ($m = 1, 5$), the Markov chain mixes well, with the autocorrelation function decaying rapidly.   However, for $m = 10$, mixing becomes noticeably slower, and for $m = 50$ the convergence almost ceases, indicating poor exploration of the state space~\cite{robert1999monte}. We compare the performance of the standard slice sampler with the proposed ASG sampler.
To evaluate the representativeness of the MCMC samples drawn from the 10-dimensional kernel using our proposed ASG, the distribution of the sampled radii
\[
r = \|x\| = \sqrt{x_1^2 + x_2^2 + \cdots + x_{10}^2}
\]
was compared with the theoretical Gamma$(10,1)$ distribution.
The histogram of the simulated $\|x\|$ values closely matched the red overlaid Gamma$(10,1)$ curve, confirming that the generated samples follow the correct theoretical distribution. Shown in the Figure~\ref{fig:hist_overlay}.
Also the autocorrelation function of $\|x\|$ decayed rapidly, indicating that the sampler achieved adequate mixing for $m=10$, shown in the Figure~\ref{fig:acf}.
The results confirm that, while the slice sampler performs well for moderate dimensions, its efficiency declines sharply as $m$ grows.  
The degeneracy arises because the radial component $r = \|x\|$ concentrates around its mean with increasing dimension, causing narrow level sets that hinder effective movement of the sampler.  
\begin{figure}[!ht]
    \centering
     \begin{subfigure}[b]{0.45\linewidth}
        \includegraphics[width=\linewidth]{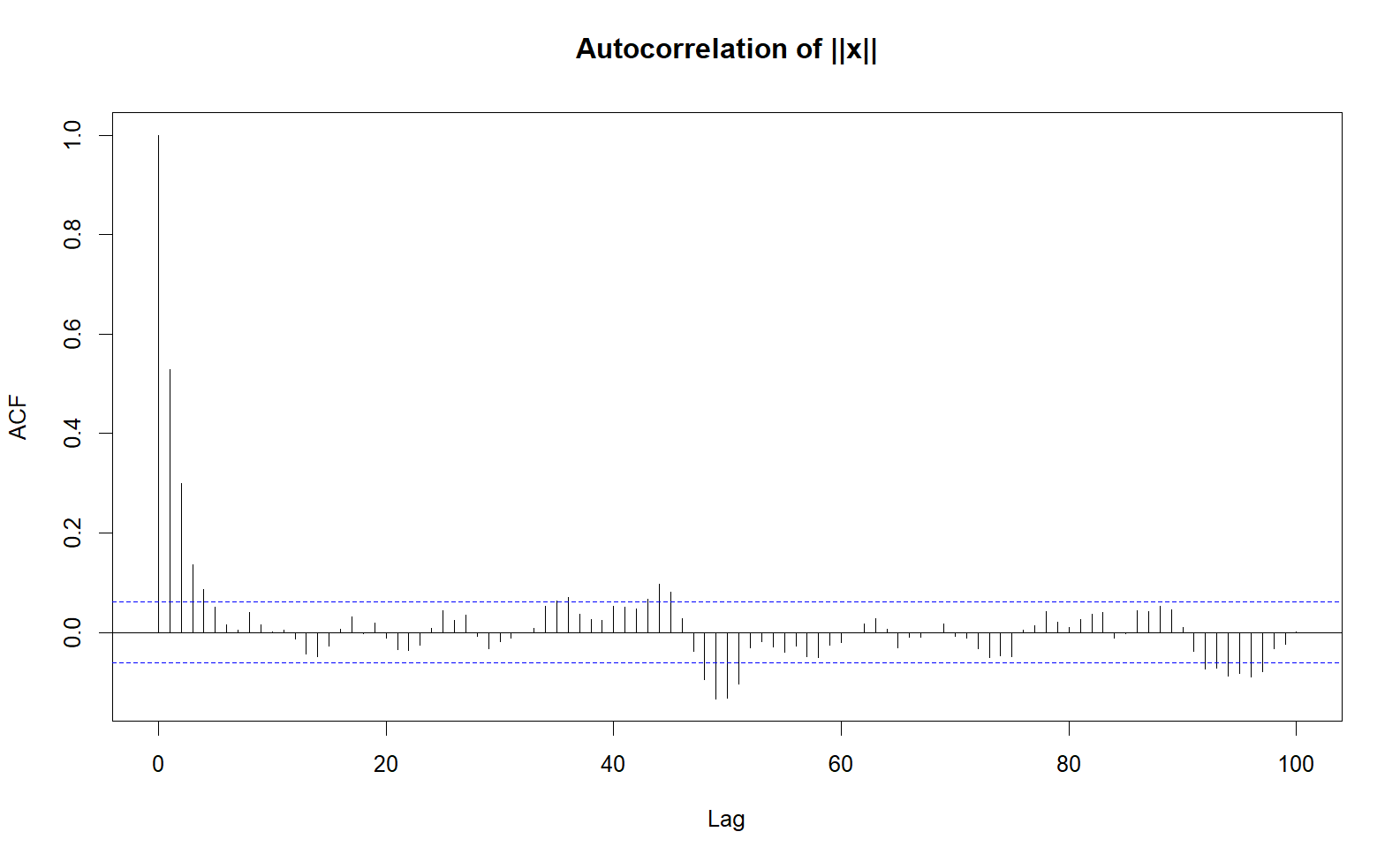}
        \caption{Autocorrelation function (ACF) of the sampled radii $\|x\|$ for the 10-dimensional kernel. Negative correlations at higher lags further enhance the effective sample size (ESS), suggesting that the simulated samples are nearly independent.}
        \label{fig:acf}
    \end{subfigure}
    \hfill
     \begin{subfigure}[b]{0.45\linewidth}
        \includegraphics[width=\linewidth, height = 5cm]{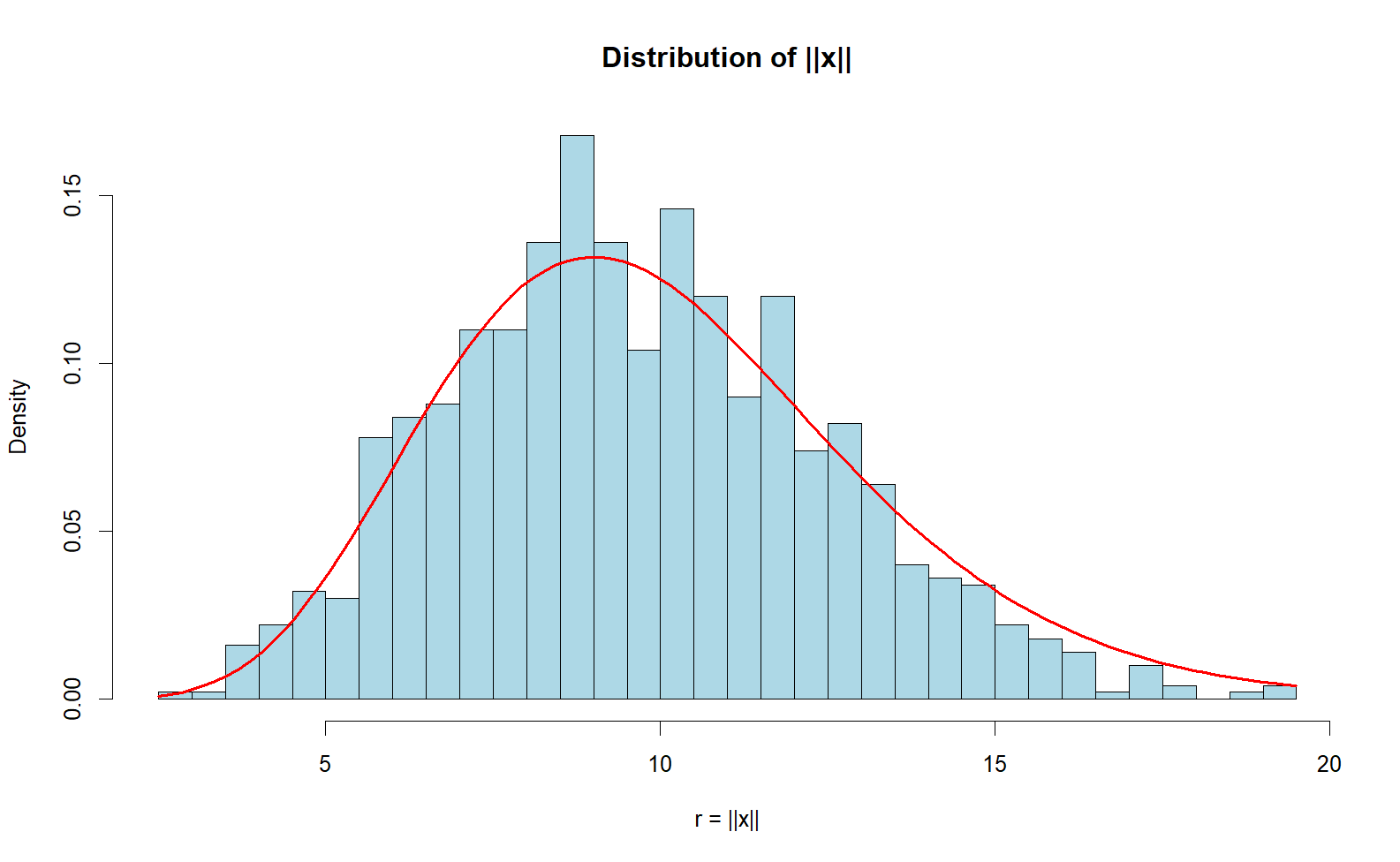}
        \caption{This plot compares the empirical density (black histogram) of samples generated by the proposed {Automated Sliced Gibbs (ASG) sampler} with the true density (red curve) for a 10 dimensional case.}
        \label{fig:hist_overlay}
    \end{subfigure}
    \caption{Performance of the ASG for the Euclidean norm kernel.}
\end{figure}
\subsection{Comparison of Algorithms in terms of Clock Time}
The performance of the proposed ASG Sampler and the RW-MH algorithm was evaluated based on the effective sample size per second  (ESS/s), a metric that quantifies the efficiency of the sampling process by accounting for autocorrelation in the Markov chain. Figure~\ref{fig:comp_uni_time} illustrates the ESS per second (denoted ESS/s) as a function of $\log_{10}(N)$, where $ N $ represents the total number of samples after burn-in, for both methods over a range of execution times (10, 20, 30s). The proposed method is shown in cyan, while the RW-MH method is shown in pink, with distinct markers corresponding to each time duration.
The results indicate that the proposed method consistently achieves higher ESS/s values compared to the RW-MH algorithm, particularly at lower $\log_{10}(N)$ values (e.g., $ N \approx 10^3 $), suggesting improved efficiency in the initial sampling phases. As $\log_{10}(N)$ increases (e.g., $ N \approx 10^5 $ to $ 10^6 $), the ESS/s for both methods tends to decrease, reflecting increased autocorrelation or computational overhead with larger sample sizes. Notably, the proposed method maintains a relatively stable ESS/s across all time durations, whereas the RW-MH method shows a more pronounced decline, especially at longer execution times. This suggests that the proposed approach may be more robust to scaling with sample size and time. A similar figure~\ref{fig:comp__sample} is shown where the ESS/N with respect to time when the samples are fixed.
\begin{figure}[!ht]
    \centering
     \begin{subfigure}[b]{0.49\linewidth}
    \centering
    \includegraphics[width=\linewidth]{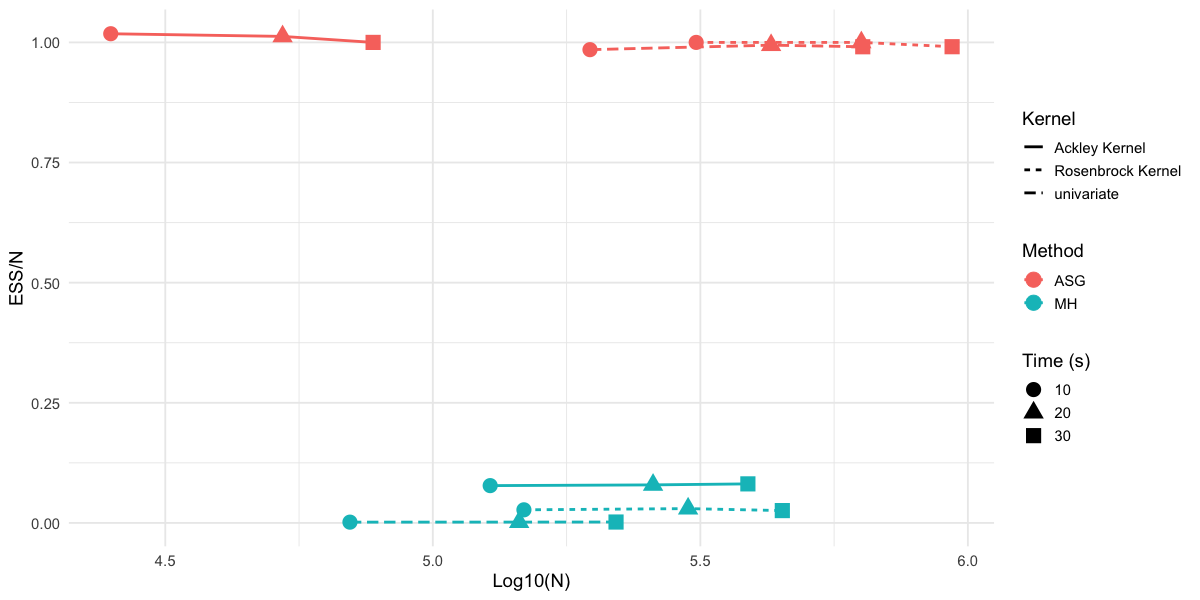}
    \caption{Comparison of ASG and RW-MH in terms of ESS/s for time being fixed}
    \label{fig:comp_uni_time}
\end{subfigure}
\hfill
     \begin{subfigure}[b]{0.49\linewidth}
    \centering
    \includegraphics[width=\linewidth]{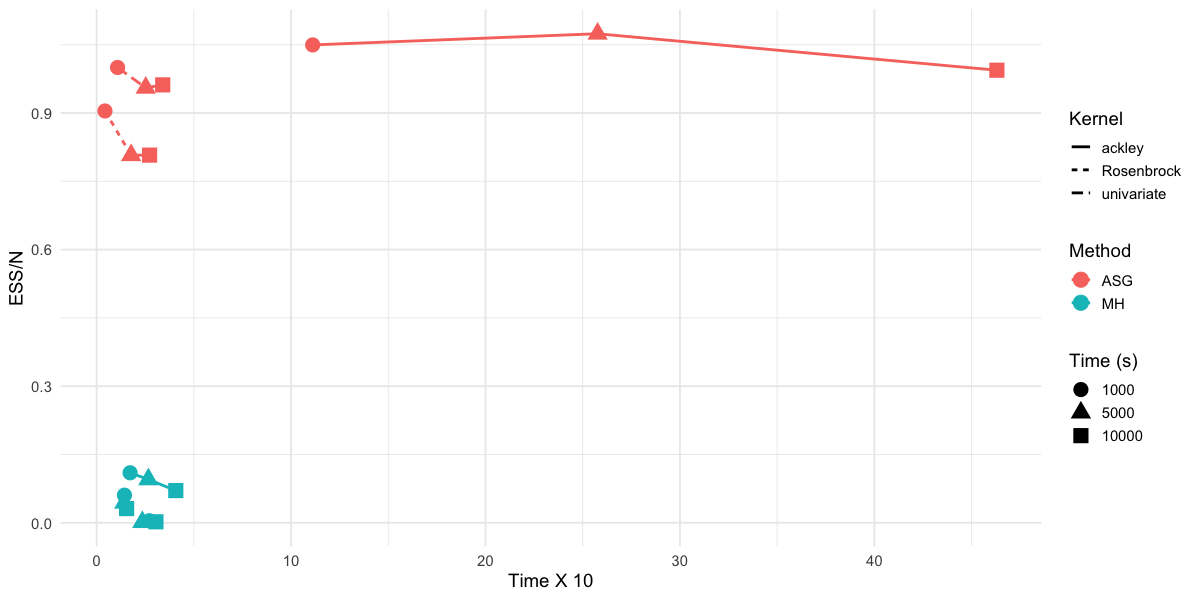}
    \caption{Comparison of ASG and RW-MH in terms of ESS/s for sample being fixed}
    \label{fig:comp__sample}
    \end{subfigure}
    \caption{Comparison figures of Algorithms in terms of Clock Time}
\end{figure}
\section{Application of ASG to Popular Statistical Loss Functions}\label{sec:non smooth}
The objective of applying the proposed ASG sampler to popular statistical loss functions such as LASSO and its generalizations is to demonstrate how optimization-based machine learning models can be recast as fully probabilistic inference problems and analyzed through sampling rather than point estimation alone. By transforming these loss functions into the corresponding probability kernels, we highlight how ASG can be used to generate draws from the entire posterior distribution, thereby enabling principled uncertainty quantification that is largely absent from standard ML practice, which typically relies only on mode estimation. This is particularly important for non-convex loss landscapes (equivalently, non-concave densities), where multiple local optima, sharp ridges, and irregular geometry make both optimization and sampling challenging. The aim of these examples is to show that ASG can efficiently explore such complex distributions, producing high-quality samples that capture multimodality and parameter uncertainty while remaining computationally practical. Performance will be compared with established MCMC methods, illustrating that ASG’s automated support calibration and slice-based updates provide substantial advantages for reliable and scalable Bayesian inference in modern statistical and machine learning models.

\subsection{LASSO loss function based kernel}\label{lasso}

We sample in $x=(\beta_0,\beta_1,\cdots,\beta_{20})$ from the unnormalized density
\begin{equation}\label{eq:lasso/bridge}
K(x)\;\propto\;\exp\!\left\{-\left[
\frac{1}{2N}\sum_{i=1}^{N}\!\big(y_i-\beta_0-\mathbf z_i^\top\beta\big)^2
+\lambda\sum_{j=1}^{2}\lvert \beta_j\rvert\ ^{\alpha}\right]\right\}
\end{equation}
with $\beta=(\beta_1,\beta_2, \cdots, \beta_{20})^\top$, $\mathbf z_i$ taken as the standardized columns from \texttt{QuickStartExample} dataset, $N=\lvert y\rvert$, $\alpha = 1$ and $\lambda=0.1$. We used our proposed ASG sampler with
$n_{\text{samples}}=100000$, burn-in $=2500$, thinning $=1$,
and starting value $x^{(0)}=(0,\cdots,0,0)$. 
Also, we computed the effective sample size (ESS) for each component of 
$x=(\beta_0,\beta_1,\cdots, \beta_{20})$ using the \texttt{effectiveSize} function 
from the \texttt{coda} package. The resulting values were the Avg. ESS = 66883 (+/- 17160).
These large effective sample sizes (relative to the total of $10{,}0000$ post--burn-in iterations)
indicate that the Markov chain achieved excellent mixing and that the posterior means are based on nearly independent draws from the target LASSO kernel. Fitting the same standardized design $Z$ and response $y$ with fixed
$\lambda=0.1$ gives the LASSO point estimate after using the glmnet package we get the results shown in the Table~\ref{tab:posterior_lasso_compare}. In addition, based on the samples generated by ASG, the ACF, the Trace and the running mean plots are given in the appendix.
\begin{table}[!ht]
\centering
\resizebox{\linewidth}{!}{%
\begin{tabular}{|c*{10}{c}|}
\toprule
& $\beta_0$ & $\beta_1$ & $\beta_2$ & $\beta_3$ & $\beta_4$ & $\beta_5$ & $\beta_6$ & $\beta_7$ & $\beta_8$ & $\beta_9$ \\
\midrule
Posterior Mode (ASG)
& 0.1332 & 1.3150 & 0.0011 & 0.6614 & 0.0023
& -0.8035 & 0.5434 & 0.0168 & 0.3230 & -0.0001 \\

ASG CI (2.5\%)
& 0.0111 & 1.1191 & 0.0023 & 0.4633 & -0.1224
& -1.0208 & 0.3479 & 0.0033 & 0.1266 & -0.1893 \\

ASG CI (97.5\%)
& 0.3461 & 1.5194 & 0.2094 & 0.8758 & 0.1748
& -0.5813 & 0.7291 & 0.2385 & 0.5170 & 0.0088 \\
\hline
\hline
Lasso Estimate
& 0.1500 & 1.3282 & 0 & 0.6735 & 0
& -0.8050 & 0.5269 & 0.0038 & 0.3254 & 0 \\
\bottomrule
\toprule
 $\beta_{10}$ & $\beta_{11}$ & $\beta_{12}$ & $\beta_{13}$ & $\beta_{14}$ & $\beta_{15}$ & $\beta_{16}$ & $\beta_{17}$ & $\beta_{18}$ & $\beta_{19}$ & $\beta_{20}$\\
\midrule
 0.0049 & 0.0972 & -0.0108 & -0.0025 & -1.0594
& -0.0075 & -0.0005 & 0.0029 & 0.0015 & 0.0000 & -0.9595 \\

 -0.1036 & -0.0010 & -0.2020 & -0.0882 & -1.2455
& -0.0583 & -0.0167 & -0.0053 & -0.0960 & -0.0360 & -1.1987\\

 0.1972 & 0.2541 & -0.0014 & 0.1153 & -0.8652
& 0.1133 & 0.1604 & 0.1681 & 0.0968 & 0.1440 & -0.7350\\
\hline
\hline
 0 & 0.1393 & 0 & 0 & -1.0684
& 0 & 0 & 0 & 0 & 0 & -1.0030 \\
\midrule
\end{tabular}%
}
\caption{Comparison of Posterior Modes, ASG 95\% Credible Intervals, and LASSO Estimates}
\label{tab:posterior_lasso_compare}
\end{table}
\begin{figure}[!ht]
    \centering
    \includegraphics[width=\linewidth]{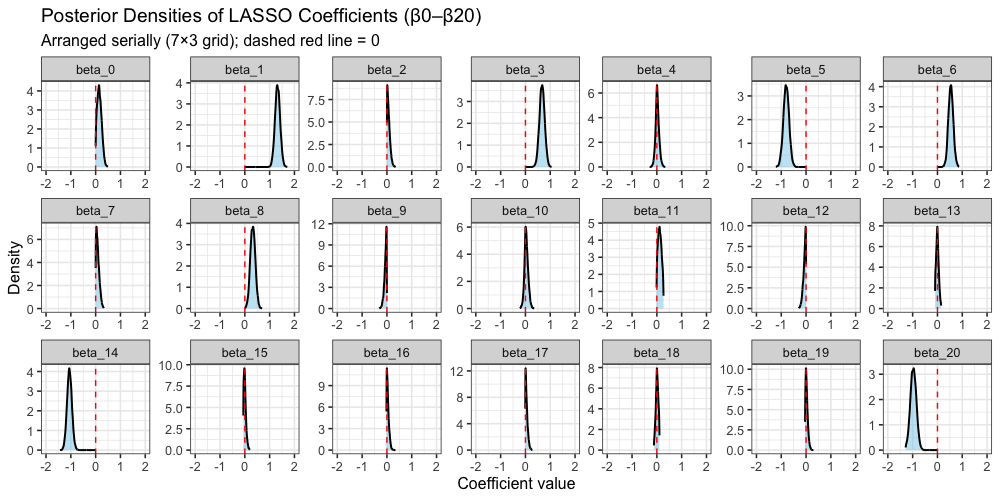}
    \caption{Posterior Densities of LASSO Coefficient after sampling from ASG Sampler.}
    \label{fig:placeholder}
\end{figure}
The \texttt{glmnet} solution minimizes the same objective and is a MAP-like
point; the MCMC means summarize the full target $K(x)$, hence can differ from the optimizer due to asymmetry and $\ell_1$-induced shrinkage. 
\subsection{Non-convex loss function based kernel}
Here as example we are using the generalized version of the lasso kernel often called the Bridge Regression Kernel applied to the same dataset as mentioned in the section~\ref{lasso}.
We used the same setup for the tool $\lambda$ as 0.001 and $\alpha$ = 0.1 for the kernel~\eqref{eq:lasso/bridge}. The rest of the setup is the same as before. We had manually chosen the $\lambda$ value to show this example, as we all know that 
the $\lambda$ value decreases as we reduce the $\alpha$ values~\cite{Frank_1993,Fu_1998}.
The posterior modes are given in the table~\ref{tab:posterior_bridge_compare}.
The resulting values were Avg. ESS = 60342 (+/- 11151). Total time taken is 160 seconds.
These large ESS (relative to total of 10,0000 post–burn-in)
indicate that the Markov chain achieved excellent mixing and that the posterior means are
based on nearly independent draws from the target Bridge regression kernel. Also we can see the posterior densities after sampling using the ASG sampler and how it looks like in the figure~\ref{fig:placeholder_br}. Similarly from the samples generated by ASG we get the, ACF, Trace
mentioned in the appendix.
\begin{table}[!ht]
\centering
\resizebox{\linewidth}{!}{%
\begin{tabular}{|c*{10}{c}|}
\toprule
& $\beta_0$ & $\beta_1$ & $\beta_2$ & $\beta_3$ & $\beta_4$ & $\beta_5$ & $\beta_6$ & $\beta_7$ & $\beta_8$ & $\beta_9$ \\
\midrule
Posterior Mode (ASG)
& 0.139 & 1.338 & 0.017 & 0.761 & -0.074
& -0.926 & 0.646 & 0.107 & 0.383 & -0.090 \\

ASG CI (2.5\%)
& 0.0121 & 1.1335 & 0.0031 & 0.5576 & -0.2285
& -1.1490 & 0.4619 & 0.0081 & 0.1917 & -0.2752 \\

ASG CI (97.5\%)
& 0.3644 & 1.5411 & 0.2340 & 0.9309 & -0.0683
& -0.7052 & 0.8496 & 0.3156 & 0.5266 & -0.0827 \\
\hline
\hline
Lasso Estimate
& 0.1500 & 1.3282 & 0.0000 & 0.6735 & 0
& -0.8050 & 0.5269 & 0.0038 & 0.3254 & 0 \\
\bottomrule
\toprule
 $\beta_{10}$ & $\beta_{11}$ & $\beta_{12}$ & $\beta_{13}$ & $\beta_{14}$ & $\beta_{15}$ & $\beta_{16}$ & $\beta_{17}$ & $\beta_{18}$ & $\beta_{19}$ & $\beta_{20}$ \\
\midrule
 0.0924 & 0.198 & -0.073 & -0.050 & -1.071
& -0.050 & 0.011 & 0.017 & 0.072 & 0.009 & -1.092 \\

-0.1057 & 0.0322 & -0.2775 & -0.1262 & -1.2650
& -0.0581 & 0.0023 & 0.0030 & -0.1139 & -0.0230 & -1.3151\\

 0.3114 & 0.4229 & -0.0640 & 0.1745 & -0.8811
& 0.1324 & 0.2159 & 0.2321 & 0.2809 & 0.1984 & -0.8480\\
\hline
\hline
 0 & 0.1393 & 0 & 0 & -1.0684
& 0 & 0 & 0 & 0 & 0 & -1.0030\\
\bottomrule
\end{tabular}%
}
\caption{Comparison of Posterior Modes, ASG 95\% Credible Intervals, and Bridge Regression Estimates for All 21 Coefficients}
\label{tab:posterior_bridge_compare}
\end{table}

\begin{figure}[!ht]
    \centering
    \includegraphics[width=\linewidth]{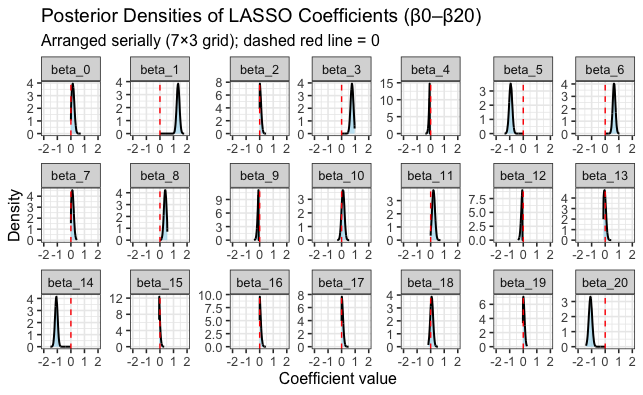}
    \caption{Posterior Densities of Bridge regression Coefficient after sampling from the proposed ASG Sampler.}
    \label{fig:placeholder_br}
\end{figure}

\section{Concluding Remarks and Future Directions}\label{sec:conclusion}
This paper introduced the Automated Sliced Gibbs (ASG) sampling framework, a fully automated MCMC method for sampling from arbitrary unnormalized probability kernels. The proposed approach combines Cauchy-based effective support estimation with slice-based coordinate updates within a Gibbs sampling structure. This construction preserves key theoretical properties such as stationarity, invariance, and ergodicity, while eliminating the need for manually specified proposal scales or support bounds.\\
Extensive numerical experiments demonstrate the robustness and efficiency of the ASG sampler across a diverse collection of challenging targets, including multimodal mixtures, highly curved benchmark functions such as the Rosenbrock and Ackley kernels, and non-smooth loss-based objectives such as LASSO. In all cases, the sampler produced stable chains with rapid mixing and very low autocorrelation. The resulting effective sample sizes were substantially larger than those obtained using the Random Walk Metropolis–Hastings algorithm, often by several orders of magnitude. Despite the additional computation required for automated support estimation, the ASG framework consistently achieved higher effective sample size per second, indicating superior time-normalized sampling efficiency.\\
These results highlight the advantages of combining adaptive effective-support determination with slice-based Gibbs updates for exploring complex probability landscapes. The empirical findings suggest that the ASG framework provides a practical and reliable sampling strategy for distributions that exhibit multimodality, sharp curvature, heavy tails, or non-smooth structure.\\
Several directions remain for future research. Theoretical analysis of spectral gap properties and convergence rates in higher-dimensional settings would further clarify the algorithm's asymptotic behavior. Extensions to block updates and parallel implementations could improve scalability for large-scale problems. Finally, integration of the ASG framework with modern probabilistic programming environments may broaden its applicability in Bayesian machine learning and high-dimensional statistical modeling.

\section*{Code availability}
 Codes are made available on request from the first author.
 %\href{https://github.com/Prithwish-ghosh/AUTOMATED-SLICED-GIBBS-SAMPLER-FOR-ARBITRARY-KERNELS/tree/main}{Github} site
\section{Acknowledgeable}Both authors would like to thank the Department of Statistics at NC State University for providing computing resources. No other external funding was received in support of this work.
\bibliographystyle{plain}
\bibliography{sample}

\newpage
\appendix
\section{Additional Visual Illustrations}\label{sec:appendix_examples}

\begin{figure}[!ht]
    \centering
     \begin{subfigure}[b]{0.49\linewidth}
        \includegraphics[width=\linewidth]{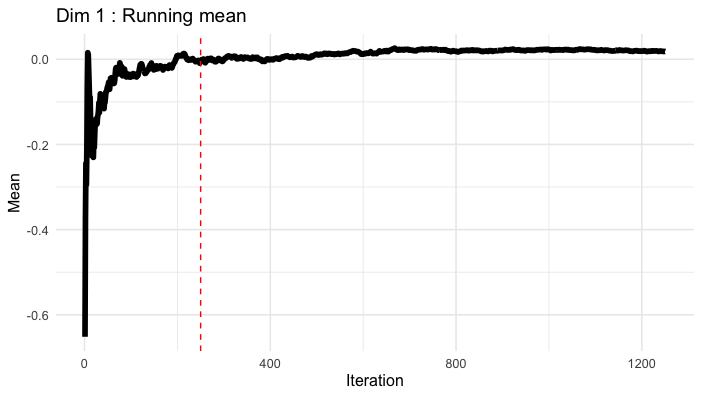}
        \caption{Mean plot for Proposed ASG sampler}
    \end{subfigure}
    \hfill
     \begin{subfigure}[b]{0.49\linewidth}
        \includegraphics[width=\linewidth]{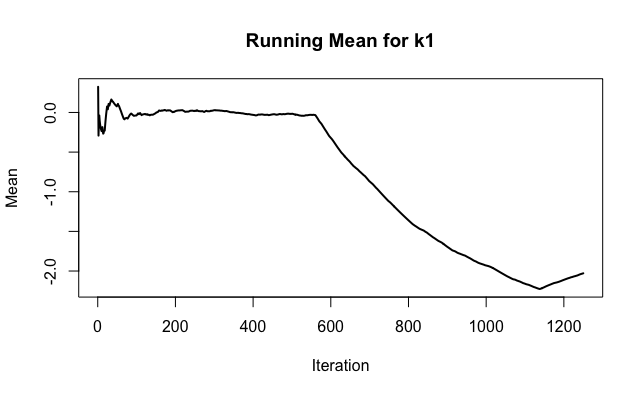}
        \caption{Mean Plot for RW-MH algorithm}
    \end{subfigure}
\caption{Univariate Case Running Mean Plots of posterior density}
\label{rmp_uni}
\end{figure}

\begin{figure}[!ht]
    \centering
     \begin{subfigure}[b]{0.45\linewidth}
        \includegraphics[width=\linewidth]{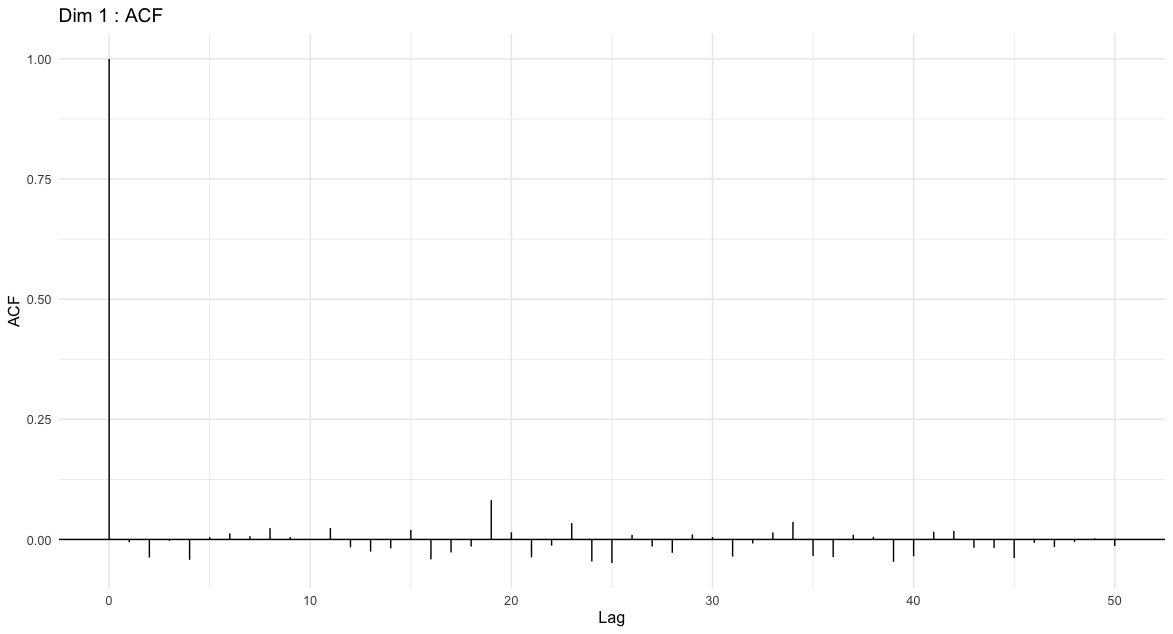}
        \caption{ACF plot for Proposed Sliced Gibbs sampler of posterior density}
    \end{subfigure}
    \hfill
     \begin{subfigure}[b]{0.45\linewidth}
        \includegraphics[width=\linewidth]{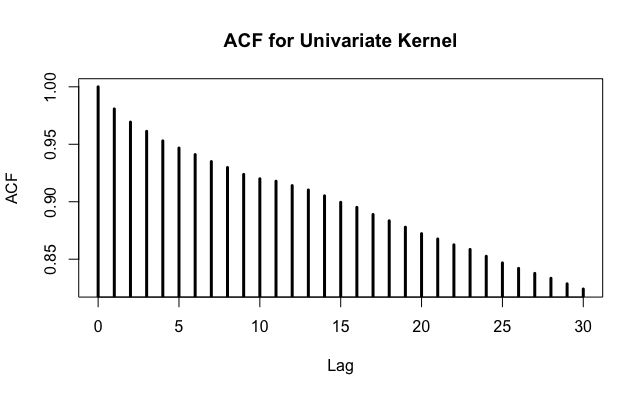}
        \caption{ACF Plot for RW-MH algorithm of the posterior density}
    \end{subfigure}
\caption{Univariate Case ACF Plots of the posterior density}
\label{acf_uni}
\end{figure}

\begin{figure}[!ht]
    \centering
     \begin{subfigure}[b]{0.45\linewidth}
        \includegraphics[width=\linewidth]{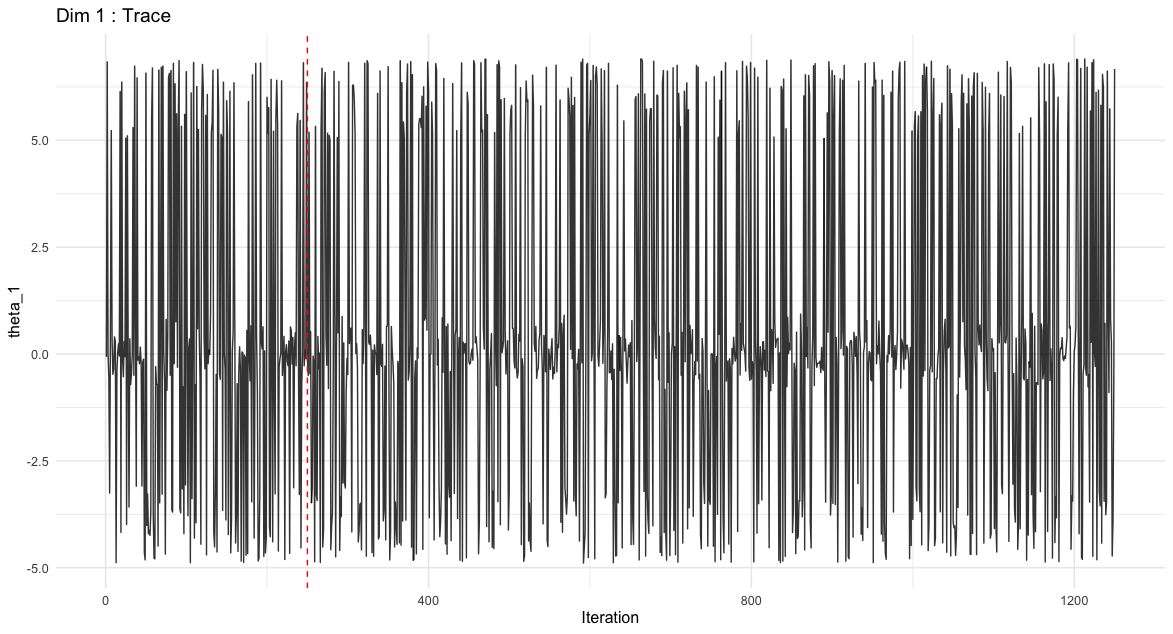}
        \caption{Trace plot for Proposed Sliced Gibbs sampler of the posterior density}
    \end{subfigure}
    \hfill
     \begin{subfigure}[b]{0.45\linewidth}
        \includegraphics[width=\linewidth]{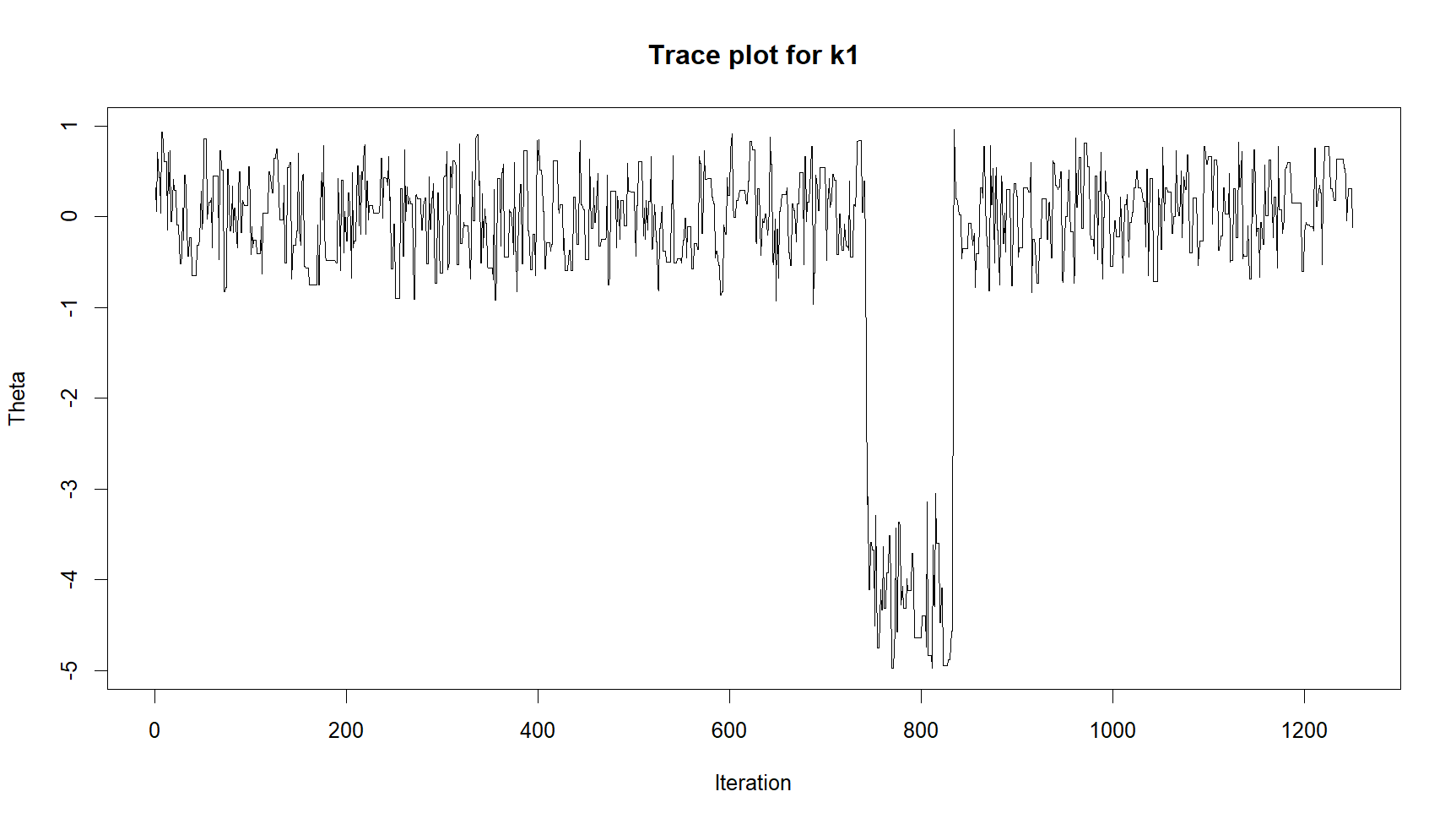}
        \caption{Trace Plot for RW-MH algorithm of the posterior density}
    \end{subfigure}
\caption{Univariate Case Trace Plots of the posterior density}
\label{trace_uni}
\end{figure}

%\subsection{Rosenbrock (Banana) Shaped Kernel}

\begin{figure}[!ht]
    \centering
    
     \begin{subfigure}[b]{0.49\linewidth}
        \includegraphics[width=\linewidth, height = 3.8cm]{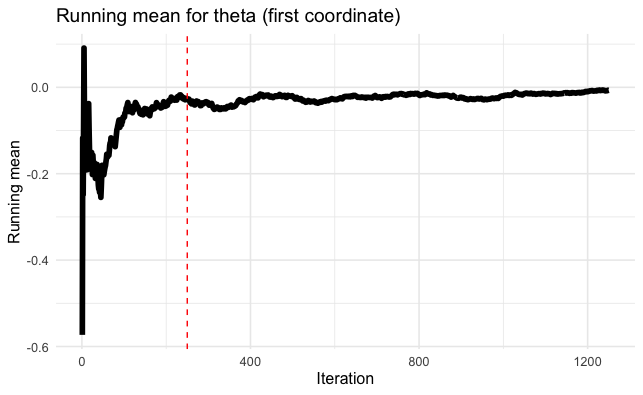}
        \caption{Mean plot for Proposed Sliced Gibbs sampler of the posterior density}
    \end{subfigure}
    \hfill
     \begin{subfigure}[b]{0.49\linewidth}
        \includegraphics[width=\linewidth, height = 4cm]{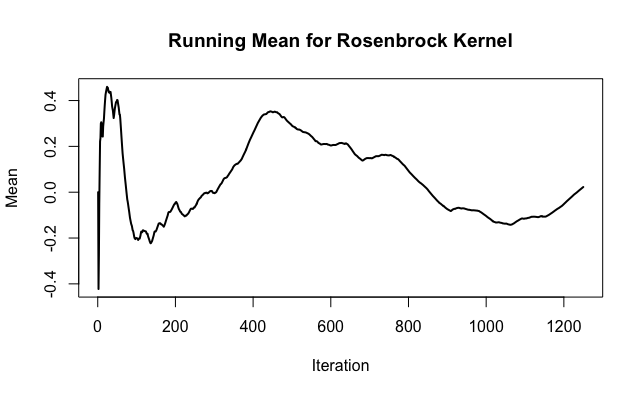}
        \caption{Mean Plot for RW-MH algorithm of the posterior density}
    \end{subfigure}
\caption{Banana Shaped Kernel Running Mean Plot of the posterior density}
\label{rmp_all_rose}
\end{figure}

\begin{figure}[!ht]
    \centering
     \begin{subfigure}[b]{0.45\linewidth}
        \includegraphics[width=\linewidth, height= 4cm]{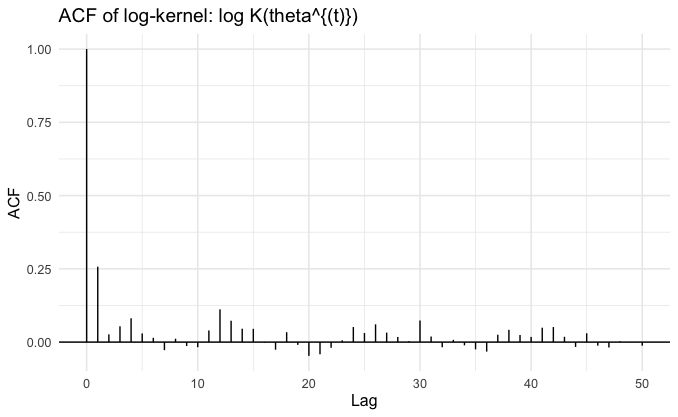}
        \caption{ACF plot for Proposed Sliced Gibbs sampler of the posterior density}
    \end{subfigure}
    \hfill
     \begin{subfigure}[b]{0.45\linewidth}
        \includegraphics[width=\linewidth]{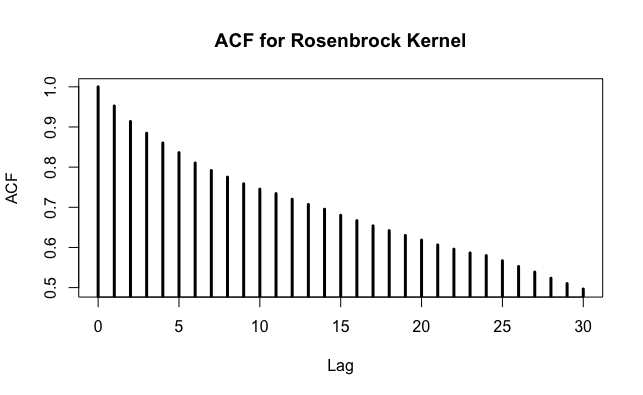}
        \caption{ACF Plot for RW-MH algorithm of the posterior density}
    \end{subfigure}
\caption{Rosenbrock (Banana) Shaped Case ACF Plots of the posterior density}
\label{acf_all_rose}
\end{figure}

\begin{figure}[!ht]
    \centering
     \begin{subfigure}[b]{0.45\linewidth}
        \includegraphics[width=\linewidth]{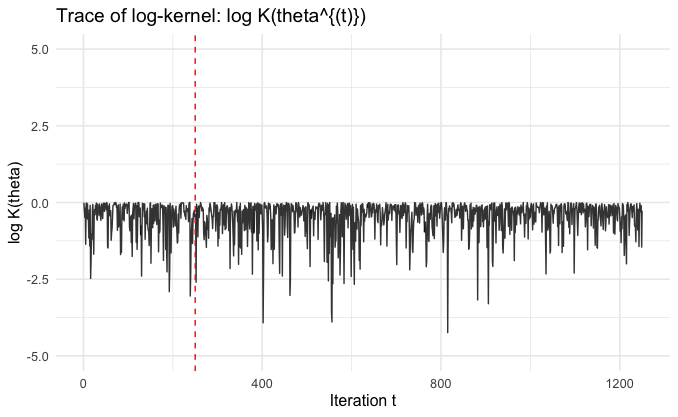}
        \caption{Trace plot for Proposed Sliced Gibbs sampler of the posterior density}
    \end{subfigure}
    \hfill
     \begin{subfigure}[b]{0.45\linewidth}
        \includegraphics[width=\linewidth, height = 5cm]{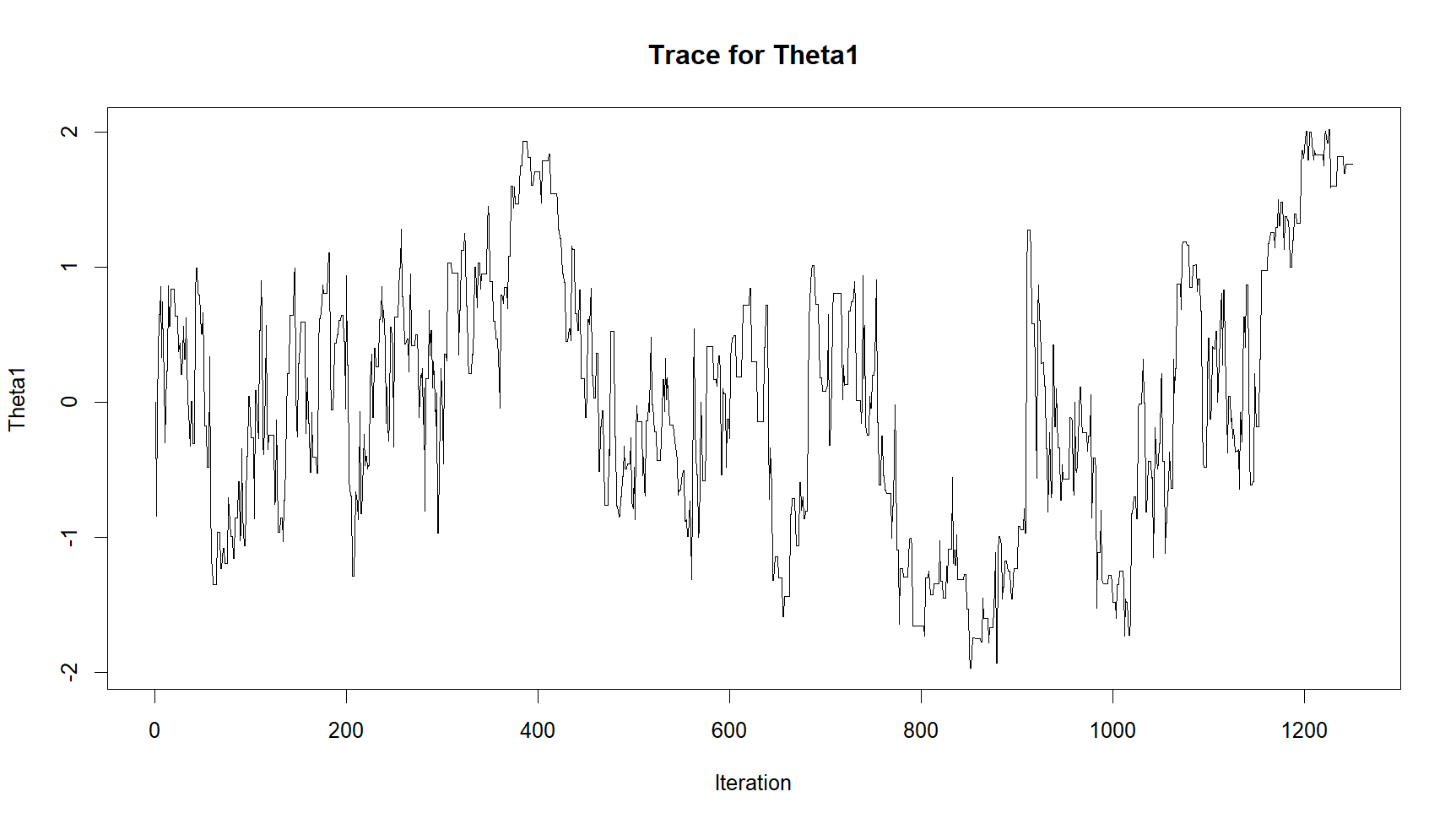}
        \caption{Trace Plot for RW-MH algorithm of the posterior density}
    \end{subfigure}
\caption{Rosenbrock (Banana) Shaped Case Trace Plots of the posterior density}
\label{trace_all_rose}
\end{figure}

%\subsection{Ackley Kernel}

\begin{figure}[!ht]
    \centering
     \begin{subfigure}[b]{0.49\linewidth}
        \includegraphics[width=\linewidth]{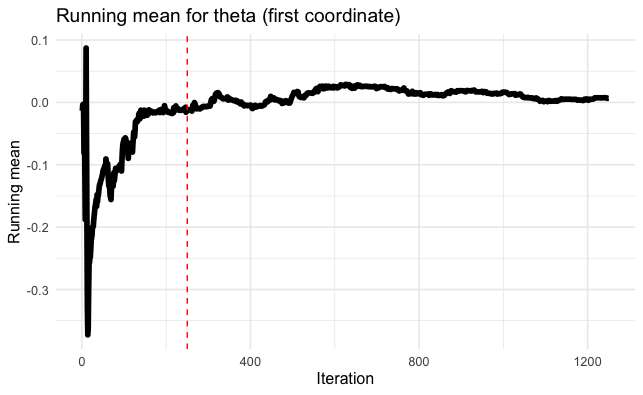}
        \caption{Mean plot for Proposed Sliced Gibbs sampler of the posterior density}
    \end{subfigure}
    \hfill
     \begin{subfigure}[b]{0.49\linewidth}
        \includegraphics[width=\linewidth, height = 5cm]{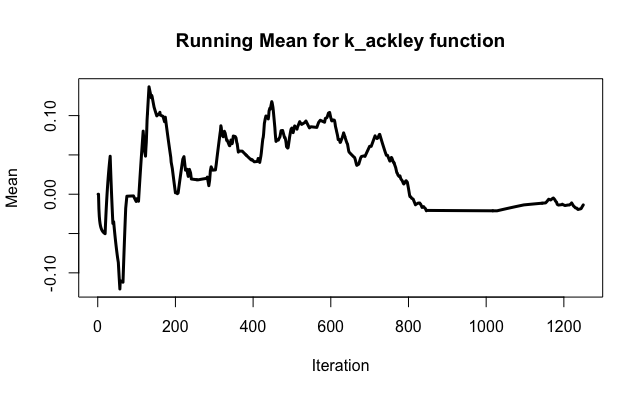}
        \caption{Mean Plot for RW-MH algorithm of the posterior density}
    \end{subfigure}
\caption{Ackley Kernel Running Mean Plot of the posterior density}
\label{rmp_all_ack}
\end{figure}

\begin{figure}[!ht]
    \centering
     \begin{subfigure}[b]{0.45\linewidth}
        \includegraphics[width=\linewidth]{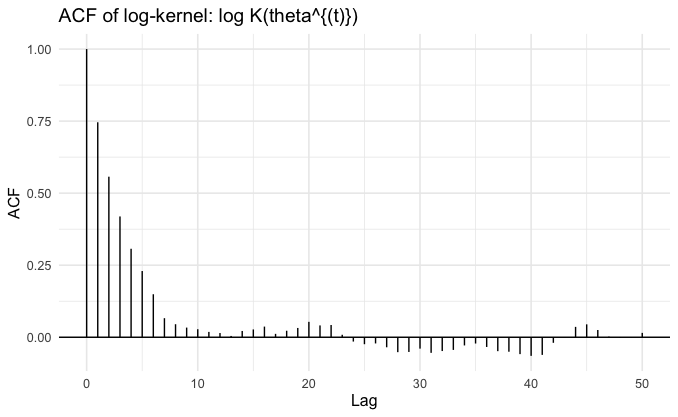}
        \caption{ACF plot for Proposed Sliced Gibbs sampler of the posterior density}
    \end{subfigure}
    \hfill
     \begin{subfigure}[b]{0.45\linewidth}
        \includegraphics[width=\linewidth]{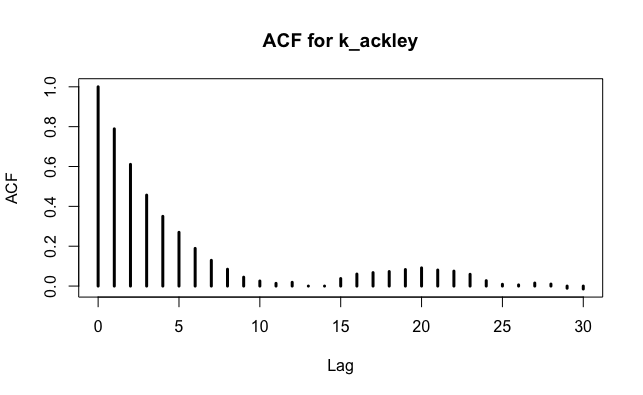}
        \caption{ACF Plot for RW-MH algorithm of the posterior density}
    \end{subfigure}
\caption{Ackley Kernel ACF Plots of the posterior density}
\label{acf_all_ack}
\end{figure}

\begin{figure}[!ht]
    \centering
     \begin{subfigure}[b]{0.45\linewidth}
        \includegraphics[width=\linewidth]{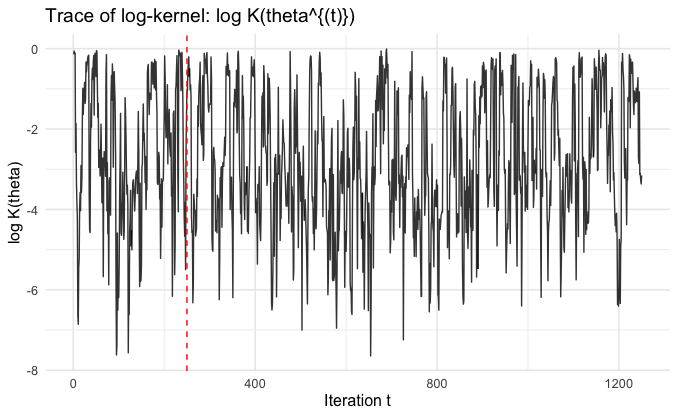}
        \caption{Trace plot for Proposed Sliced Gibbs sampler of the posterior density}
    \end{subfigure}
    \hfill
     \begin{subfigure}[b]{0.45\linewidth}
        \includegraphics[width=\linewidth]{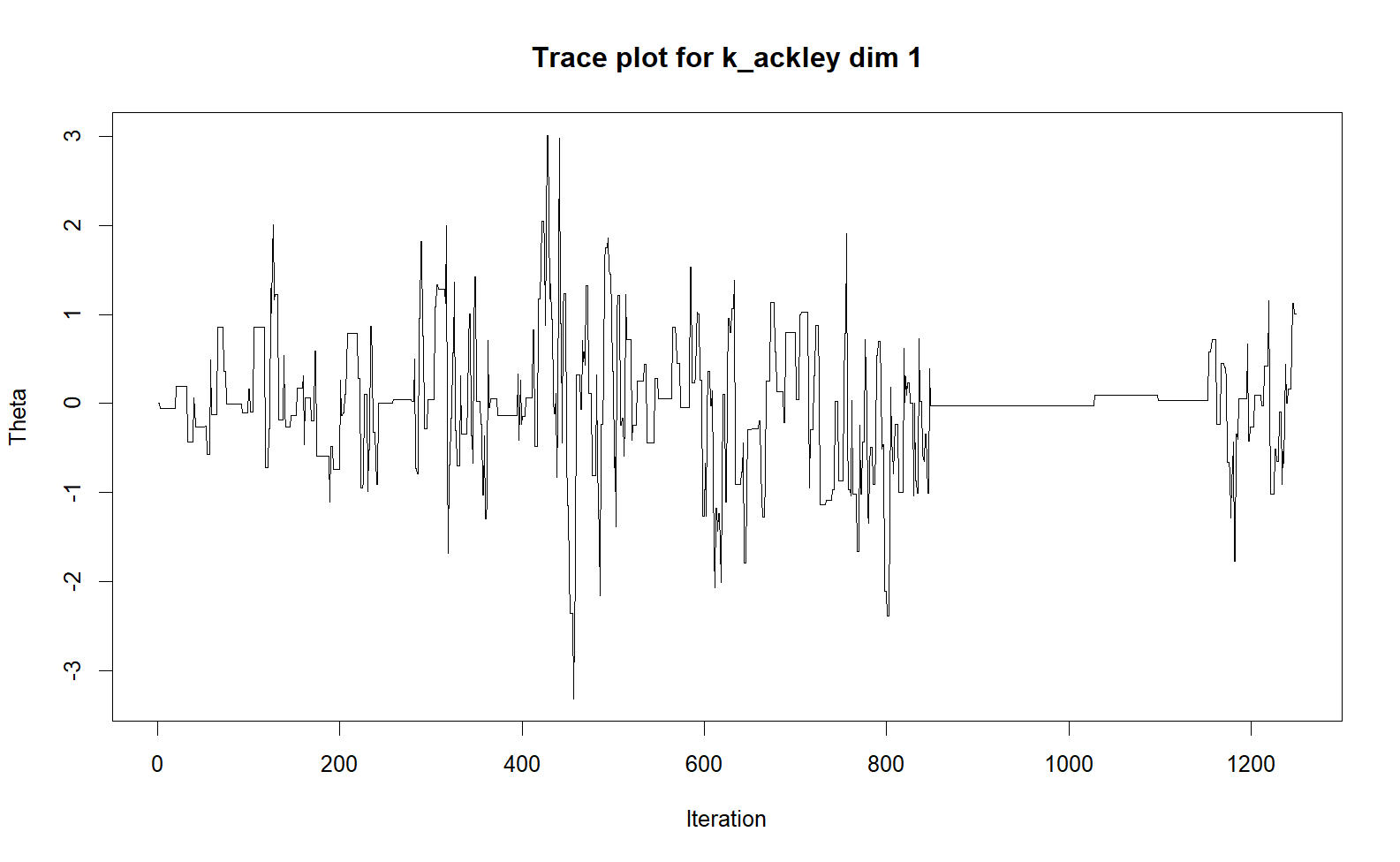}
        \caption{Trace Plot for RW-MH algorithm of the posterior density}
    \end{subfigure}
\caption{Ackley Kernel Trace Plots of the posterior density}
\label{trace_all_ack}
\end{figure}

%\subsection{LASSO based analysis}\label{lasso}

\begin{figure}[!ht]
    \centering
     \begin{subfigure}[b]{0.32\linewidth}
        \includegraphics[width=\linewidth]{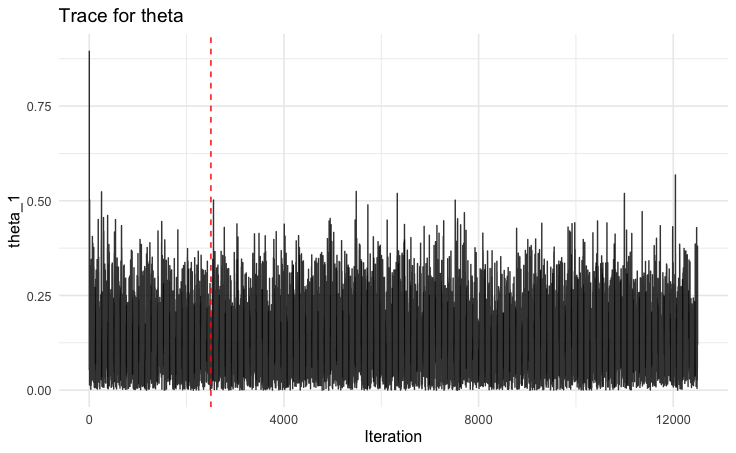}
        \caption{Trace plot for Proposed Sliced Gibbs sampler of the posterior density}
    \end{subfigure}
    \hfill
     \begin{subfigure}[b]{0.32\linewidth}
        \includegraphics[width=\linewidth]{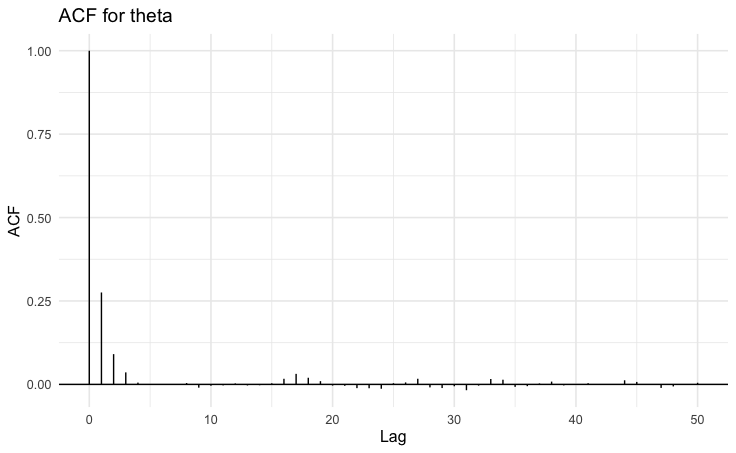}
        \caption{ACF plot for Proposed Sliced Gibbs sampler of the posterior density}
    \end{subfigure}
    \hfill
         \begin{subfigure}[b]{0.32\linewidth}
        \includegraphics[width=\linewidth]{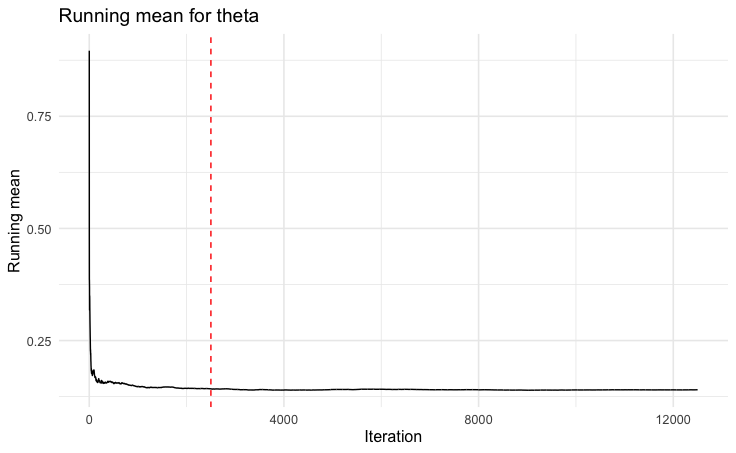}
        \caption{Running Mean plot for Proposed Sliced Gibbs sampler of the posterior density}
    \end{subfigure}
\caption{LASSO kernel Case Trace Plots of the posterior density}
\label{trace_all_lasso}
\end{figure}

%\subsection{Bridge Regression based analysis}\label{bridge}

\begin{figure}[!ht]
    \centering
     \begin{subfigure}[b]{0.32\linewidth}
        \includegraphics[width=\linewidth, height = 3.1cm]{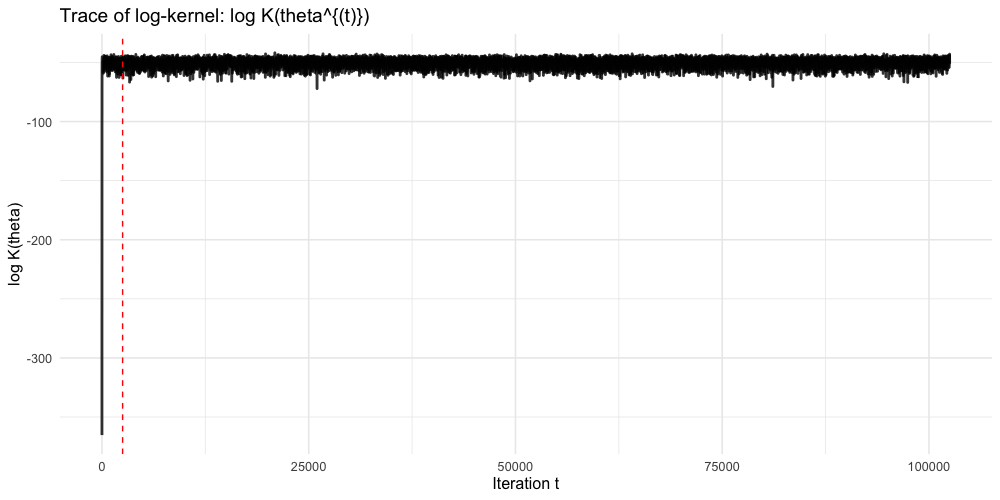}
        \caption{Trace plot for Proposed Sliced Gibbs sampler of the posterior density}
    \end{subfigure}
    \hfill
     \begin{subfigure}[b]{0.32\linewidth}
        \includegraphics[width=\linewidth]{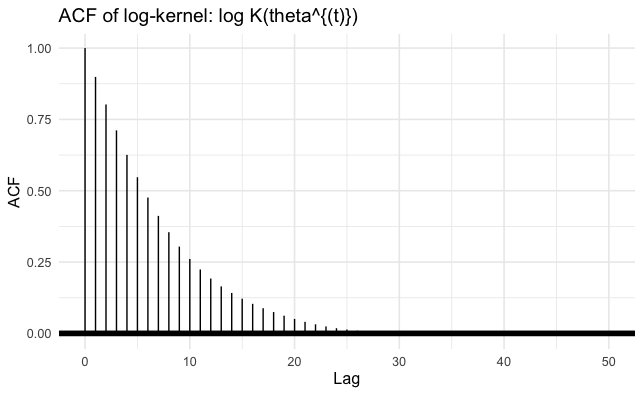}
        \caption{ACF plot for Proposed Sliced Gibbs sampler of the posterior density}
    \end{subfigure}
    \hfill
         \begin{subfigure}[b]{0.32\linewidth}
        \includegraphics[width=\linewidth]{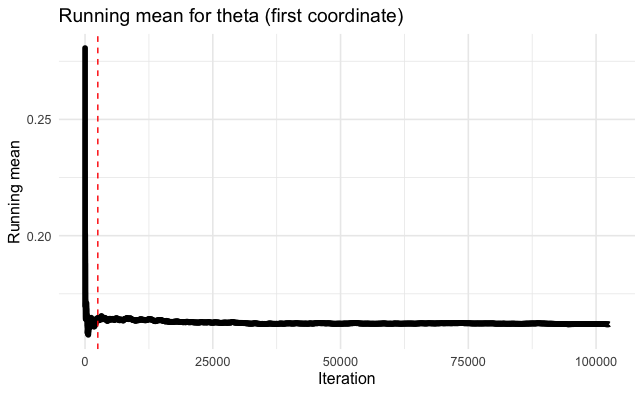}
        \caption{Running Mean plot for Proposed Sliced Gibbs sampler of the posterior density}
    \end{subfigure}
\caption{Bridge Regression kernel Case Trace Plots of the posterior density}
\label{trace_all_bridge}
\end{figure}
\newpage
\section{Additional examples using challenging test kernels}
All of the MCMC samplers are compared based on a set of challenging benchmark density kernels suggested in the article ~\cite{williams2025geodesicslicesamplermultimodal} (see the Appendix of the article for further details).

\subsection{The funnel Distribution}
\[
\text{Let x} \in \mathbb{R}^D\text{(Here D = 10) then,}\quad
p(x)
=
\frac{1}{\sqrt{2\pi \sigma^2}}
\exp\left(-\frac{x_D^2}{2\sigma^2}\right)
\prod_{i=1}^{D-1}
\frac{1}{\sqrt{2\pi e^{x_D}}}
\exp\left(
-\frac{(x_i - \mu)^2}{2 e^{x_D}}
\right).
\]
\begin{table}[!ht]
    \centering
    
    \begin{tabular}{l|rr}
    \textbf{Sampler} & \textbf{Time Taken}(in sec) & \textbf{Avg ESS}  \\
    \hline
    \hline
      \textbf{Proposed AGS}   & 0.71  &  938    \\
      \textbf{RW-MH}   & 0.26  & 16  \\
      \textbf{HMC} & 1.20 & 15 \\
      \textbf{Adaptive Gibbs} & 0.04 & 19 \\
      \textbf{Qslice} & 0.40  & 888 \\
    \textbf{Elliptical slice} & 0.05 & 186  \\
    \end{tabular}
\caption{Comparison of ASG and other algorithms for 10 dimensional funnel Distribution Kernel based on $1000$ MCMC samples post burn-in}
    \label{tab:comparison_funnel}
\end{table}

\begin{figure}[!ht]
    \centering
     \begin{subfigure}[b]{0.4\linewidth}
        \includegraphics[width=\linewidth]{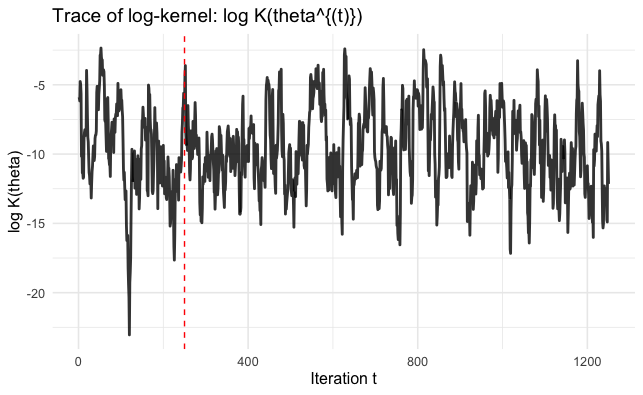}
        \caption{Trace plot for Proposed Sliced Gibbs sampler of the posterior density}
    \end{subfigure}
    \hfill
     \begin{subfigure}[b]{0.4\linewidth}
        \includegraphics[width=\linewidth]{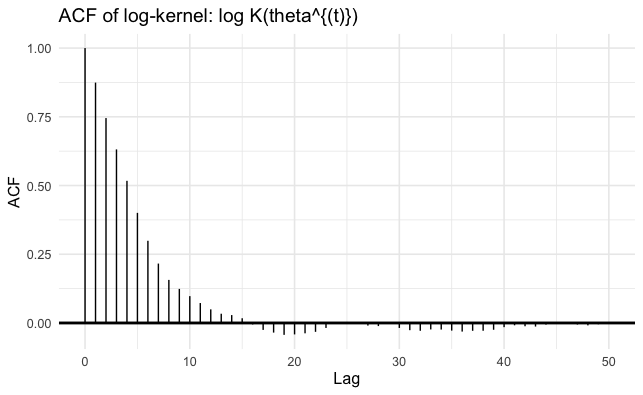}
        \caption{ACF plot for Proposed Sliced Gibbs sampler of the posterior density}
    \end{subfigure}
    \hfill
         \begin{subfigure}[b]{0.33\linewidth}
        \includegraphics[width=\linewidth]{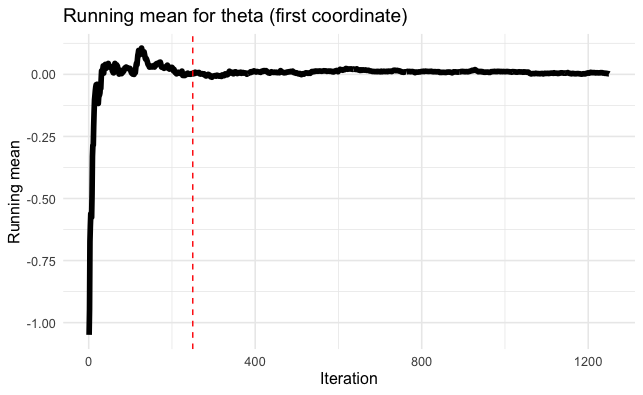}
        \caption{Running Mean plot for Proposed Sliced Gibbs sampler of the posterior density}
   \end{subfigure}
\caption{Funnel Distributional kernel Case Trace Plots of the posterior density}
\label{trace_all_funn}
\end{figure}

\subsection{The hybrid rosenbrock kernel}

\[
\text{For $x = (x_1, x_2) \in \mathbb{R}^2$,}\quad
p(x)
=
\frac{1}{\sqrt{\pi}}
\exp\left(- (x_1 - a)^2 \right)
\cdot
\sqrt{\frac{b}{\pi}}
\exp\left(- b (x_2 - x_1^2)^2 \right).
\]
\begin{table}[!ht]
    \centering
    
    \begin{tabular}{l|rr}
    \textbf{Sampler} & \textbf{Time Taken}(in sec) & \textbf{Avg ESS}  \\
    \hline
    \hline
      \textbf{Proposed AGS}   & 0.300  & 510    \\
     \textbf{RW-MH}   & 0.190  & 12  \\
      \textbf{HMC} & 0.240 & 0 \\
      \textbf{Adaptive Gibbs} & 0.013 & 19 \\
      \textbf{Qslice} & 0.437  & 2 \\
          \textbf{Elliptical slice} & 0.063 & 49  \\
    \end{tabular}
\caption{Comparison of ASG and other algorithms for 2 dimensional hybrid rosenbrock kernel based on $1000$ MCMC samples post burn-in}
    \label{tab:comparison_hrk}
\end{table}

\begin{figure}[!ht]
    \centering
     \begin{subfigure}[b]{0.4\linewidth}
        \includegraphics[width=\linewidth]{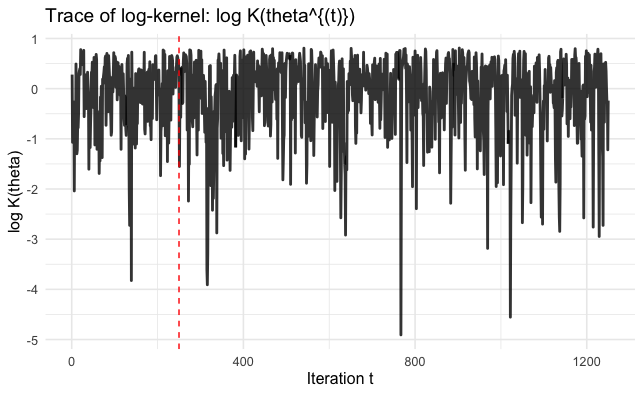}
        \caption{Trace plot for Proposed Sliced Gibbs sampler of the posterior density}
    \end{subfigure}
    \hfill
     \begin{subfigure}[b]{0.4\linewidth}
        \includegraphics[width=\linewidth]{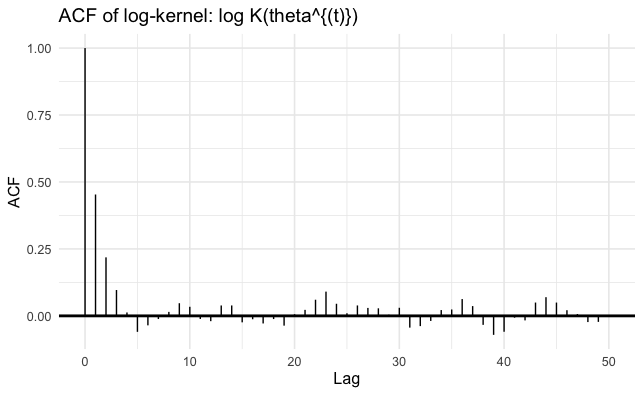}
        \caption{ACF plot for Proposed Sliced Gibbs sampler of the posterior density}
    \end{subfigure}
    \hfill
         \begin{subfigure}[b]{0.45\linewidth}
        \includegraphics[width=\linewidth]{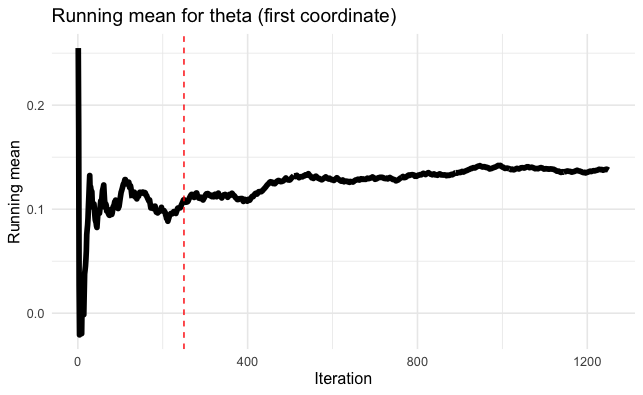}
        \caption{Running Mean plot for Proposed Sliced Gibbs sampler of the posterior density}
    \end{subfigure}
\caption{hybrid rosenbrock kernel Case Trace Plots of the posterior density}
\label{trace_all_hrbbb}
\end{figure}

\subsection{The Squiggle Distribution}
Define transformation for 3 dimensions:
$x_1 = z_1,
\quad
x_{2:D} = z_{2:D} - \sin(a z_1).$
Inverse: $z_1 = x_1,
\quad
z_{2:D} = x_{2:D} + \sin(a x_1).$
The kernel becomes
$p(x)
=
\mathcal{N}\big(z(x) \mid 0, \Sigma \big),
$
where
$
\Sigma = \text{diag}\left(5, \tfrac{1}{2}, \dots, \tfrac{1}{2}\right).$
Since $\det\left(\frac{\partial z}{\partial x}\right)=1$, with $\sigma_1^2 = 5$ and $\sigma_i^2 = \frac{1}{2}$ for $i \ge 2$.
\[
p(x)
=
\prod_{i=1}^{D}
\frac{1}{\sqrt{2\pi \sigma_i^2}}
\exp\left(
-\frac{z_i(x)^2}{2\sigma_i^2}
\right),
\]
\begin{table}[!ht]
    \centering
    
    \begin{tabular}{l|rr}
    \textbf{Sampler} & \textbf{Time Taken}(in sec) & \textbf{Avg ESS}  \\
    \hline
    \hline
      \textbf{Proposed AGS}   & 2.130  & 453    \\
      \textbf{RW-MH}   & 0.200  & 55  \\
      \textbf{HMC} & 0.380 & 3 \\
      \textbf{Adaptive Gibbs} & 0.019 & 30 \\
      \textbf{Qslice} & 0.122  & 309 \\
          \textbf{Elliptical slice} & 0.045 & 637 \\
    \end{tabular}
\caption{Comparison of ASG and other algorithms for 3 dimensional Squiggle Distribution Kernel based on $1000$ MCMC samples post burn-in}
    \label{tab:comparison_sqdk}
\end{table}

\begin{figure}[!ht]
    \centering
     \begin{subfigure}[b]{0.4\linewidth}
        \includegraphics[width=\linewidth]{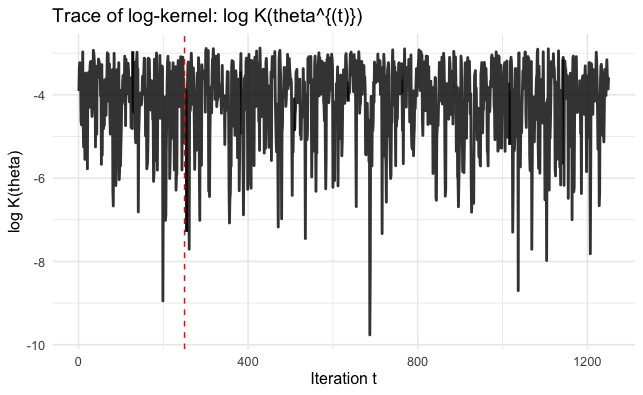}
        \caption{Trace plot for Proposed Sliced Gibbs sampler of the posterior density}
    \end{subfigure}
    \hfill
     \begin{subfigure}[b]{0.4\linewidth}
        \includegraphics[width=\linewidth]{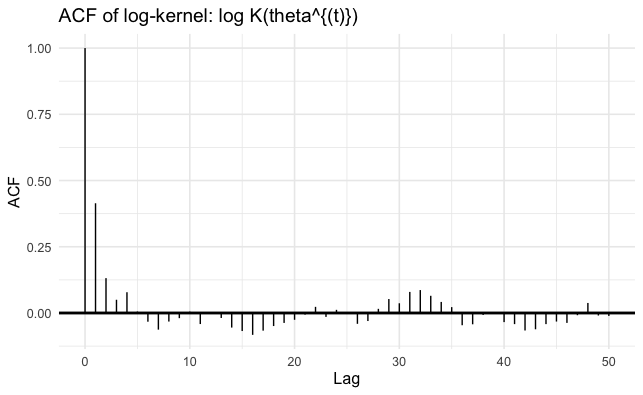}
        \caption{ACF plot for Proposed Sliced Gibbs sampler of the posterior density}
    \end{subfigure}
    \hfill
         \begin{subfigure}[b]{0.45\linewidth}
        \includegraphics[width=\linewidth]{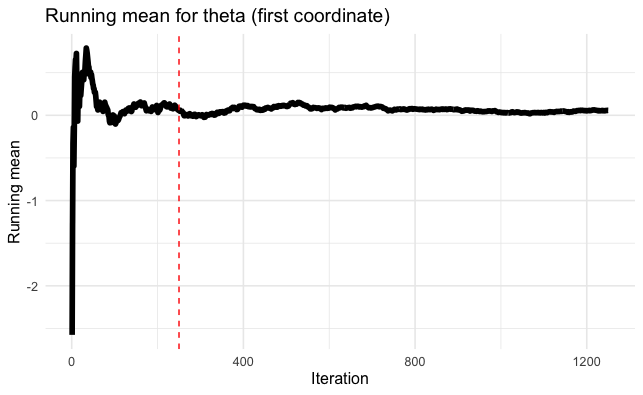}
        \caption{Running Mean plot for Proposed Sliced Gibbs sampler of the posterior density}
    \end{subfigure}
\caption{Squiggle Distribution kernel Case Trace Plots of the posterior density}
\label{trace_all_sqig}
\end{figure}

\subsection{Allen cahn kernel}
With boundary conditions with D = 10, where $b = 1/a$.
$x_0 = 0,
x_{D+1} = 0,
\Delta s = \frac{1}{D},$
\[
p(x)
\propto
\exp\left(
-\beta
\left[
\frac{a}{2\Delta s}
\sum_{i=1}^{D+1}
(x_i - x_{i-1})^2
+
\frac{b \Delta s}{4}
\sum_{i=1}^{D}
(1 - x_i^2)^2
\right]
\right),
\]
\begin{table}[!ht]
    \centering
    \begin{tabular}{l|rr}
    \textbf{Sampler} & \textbf{Time Taken}(in sec) & \textbf{Avg ESS}  \\
    \hline
    \hline
      \textbf{Proposed AGS}   & 0.442  & 405    \\
      \textbf{RW-MH}   & 0.200  & 16  \\
      \textbf{HMC} & 1.200 & 10 \\
      \textbf{Adaptive Gibbs} & 0.080 & 78 \\
      \textbf{Qslice} & 0.380  & 402 \\
          \textbf{Elliptical slice} & 0.044 &  404 \\
    \end{tabular}
\caption{Comparison of ASG and other algorithms for 10 dimensional Allen cahn kernel based on $1000$ MCMC samples post burn-in}
    \label{tab:comparison_acfk}
\end{table}
\begin{figure}[!ht]
    \centering
     \begin{subfigure}[b]{0.4\linewidth}
        \includegraphics[width=\linewidth]{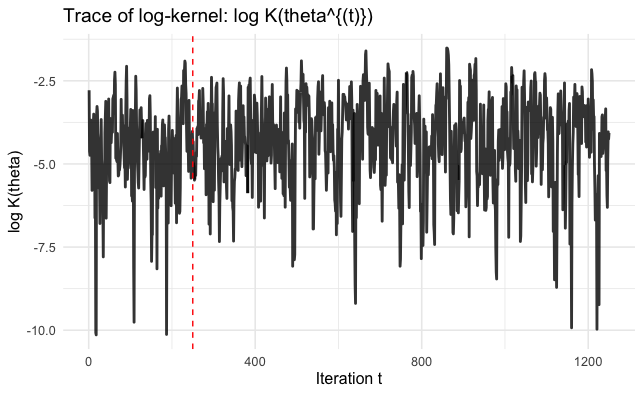}
        \caption{Trace plot for Proposed Sliced Gibbs sampler of the posterior density}
    \end{subfigure}
    \hfill
     \begin{subfigure}[b]{0.4\linewidth}
        \includegraphics[width=\linewidth]{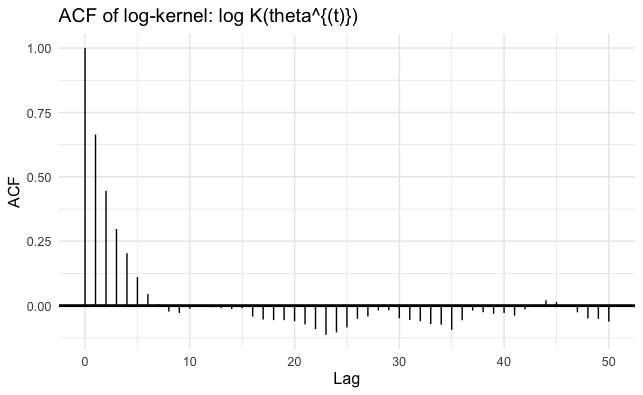}
        \caption{ACF plot for Proposed Sliced Gibbs sampler of the posterior density}
    \end{subfigure}
    \hfill
         \begin{subfigure}[b]{0.45\linewidth}
        \includegraphics[width=\linewidth]{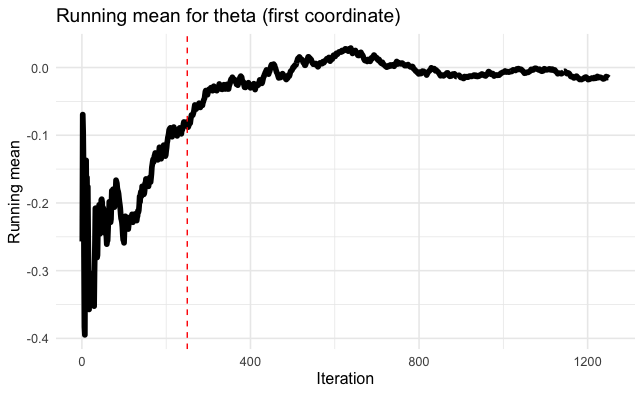}
      \caption{Running Mean plot for Proposed Sliced Gibbs sampler of the posterior density}
    \end{subfigure}
\caption{Allen cahn kernel Case Trace Plots of the posterior density}
\label{trace_all_achk}
\end{figure}
\subsection{Limitations of QSlice and Elliptical Slice Sampling}
We illustrate a limitation of both Elliptical Slice Sampling (ESS) and Quantile Slice Sampling (QSlice) when the proposal geometry is poorly specified. Using the Ackley function as the target density, we consider a highly correlated Gaussian proposal distribution,

Under this near-singular covariance structure, the effective sample size (ESS) obtained from ESS is extremely small, approximately $10$ and $10$ for the two coordinates.

A similar phenomenon occurs for QSlice when the local proposal region is excessively restricted and combined with a tight global bound. Specifically, we construct the conditional log-target

and impose restrictive sampling parameters

In this case, the effective sample sizes decrease further to approximately $4$ and $2$ for the two coordinates. These results highlight the sensitivity of both Elliptical SS and QSlice to poor tuning of proposal geometry and support restrictions.

\end{document}